\newtheorem{theorem}{Theorem}
\newtheorem{observation}[theorem]{Observation}
\newcommand\norm[1]{\left\lVert#1\right\rVert}
\newcommand{\tr}{{\mathrm{tr}}}
\newcommand{\va}[1]{\ensuremath{(\Delta#1)^2}}
\newcommand{\ex}[1]{\ensuremath{\langle{#1}\rangle}}
\newcommand{\eins}{\mathbbm{1}}
\newcommand{\swap}{\mathbb{S}}
\renewcommand{\vr}{\ensuremath{\varrho}}
\renewcommand{\vec}[1]{\ensuremath{\boldsymbol{#1}}}
\begin{document}
\title{Metrological usefulness of entanglement and nonlinear Hamiltonians}
\author{Satoya Imai, Augusto Smerzi, and Luca Pezzè}
\affiliation{QSTAR, INO-CNR, and LENS, Largo Enrico Fermi, 2, 50125 Firenze, Italy}
\date{\today}
\begin{abstract}
     A central task in quantum metrology is to exploit quantum correlations to outperform classical sensitivity limits. Metrologically useful entanglement is identified when the quantum Fisher information (QFI) exceeds a separability bound for a given parameter-encoding Hamiltonian. However, so far, only results for linear Hamiltonians are well-established. Here, we characterize metrologically useful entanglement for nonlinear Hamiltonians, presenting separability bounds for collective angular momenta. Also, we provide a general expression for entangled states maximizing the QFI, which can be written as the superposition between the GHZ-like and singlet states. Finally, we compare the metrological usefulness of linear and nonlinear cases, in terms of entanglement detection and random symmetric states.
\end{abstract}
\maketitle
{\it Introduction.---}Certifying and characterizing entanglement is a vibrant area of research in quantum information~\cite{guhne2009entanglement,horodecki2009quantum} and many-body systems~\cite{amico2008entanglement,laflorencie2016quantum,de2018genuine}. This has garnered increased attention, particularly with recent advancements in the experimental manipulation of large quantum systems~\cite{buluta2009simulators,strobel2014fisher,brydges2019probing,omran2019generation,browaeys2020many,ebadi2021quantum,bornet2023scalable,cao2023generation,bluvstein2024logical}. Remarkably, multipartite entanglement is a key resource in several quantum technologies, such as communication~\cite{cleve1999share,hillery1999quantum,lvovsky2009optical}, networks~\cite{kimble2008quantum,simon2017towards,wehner2018quantum,navascues2020genuine,kraft2021quantum,tavakoli2022bell,hansenne2022symmetries}, error correction~\cite{shor1995scheme,gottesman1996class,nielsen2010quantum}, measurement-based computation~\cite{raussendorf2001one,briegel2001persistent}, and metrology~\cite{pezze2009entanglement, hyllus2012fisher, toth2012multipartite, giovannetti2011advances, toth2014quantum, pezze2018quantum}.

In parameter estimation, useful entanglement in a quantum state $\vr$~\cite{pezze2009entanglement} is identified by the quantum Fisher information (QFI)~\cite{helstrom1967minimum,braunstein1994statistical}, denoted as $F_Q(\vr, H)$, which relies on the Hermitian operator $H$. The QFI is related to metrological sensitivity by the quantum Cram\'er-Rao bound $(\Delta \theta)^2_{\rm QCR} = 1/[\upsilon F_Q(\vr,H)]$~\cite{helstrom1967minimum,braunstein1994statistical,giovannetti2011advances, toth2014quantum, pezze2018quantum}. This provides the maximum sensitivity (optimized over all measurement observables and estimators, and saturable after a sufficiently large number, $\upsilon$, of repeated measurements) in the estimation of the parameter $\theta$ encoded in the probe state $\vr$ via the unitary transformation $e^{-i\theta H} \vr e^{i\theta H}$.

Specifically, for a given operator $H$, the bound
\begin{equation} \label{eq:SQL}
    \mathcal{C}_{\rm sep}(H) = \max_{\vr_{\rm sep}} \, F_Q(\vr_{\rm sep}, H),
\end{equation}
gives the maximum QFI over all possible fully separable states $\vr_{\rm sep}$. A certain class of operators $H$ satisfies the inequality $\mathcal{C}_{\rm sep}(H)<\mathcal{C}_{\rm ent}(H)$, where
\begin{equation} \label{eq:HL}
    \mathcal{C}_{\rm ent}(H) = \max_{\vr} \, F_Q(\vr, H),
\end{equation}
gives the maximum QFI over all possible quantum states. This class of operators sets the entanglement criterion:~if $F_Q(\vr,H)>\mathcal{C}_{\rm sep}(H)$, then $\vr$ is entangled. This condition is both necessary and sufficient for \textit{metrologically useful entangled states}~\cite{pezze2009entanglement},~i.e.,~for sensitivity overcoming the standard quantum limit,~$\!(\Delta \theta)^2_{\rm SQL} \!=\! 1/(\upsilon \mathcal{C}_{\rm sep})$,~and ultimately reaching the Heisenberg limit,~$\!(\Delta \theta)^2_{\rm HL} \!=\! 1/(\upsilon \mathcal{C}_{\rm ent})$.

To introduce our setting, let us consider the quantity
\begin{equation}\label{eq:sfraction}
    s(H)
    \equiv \frac{\mathcal{C}_{\rm ent}(H)}{\mathcal{C}_{\rm sep}(H)}
    = \frac{(\Delta \theta)^2_{\rm SQL}}{(\Delta \theta)^2_{\rm HL}}.
\end{equation}
The $s(H)$ represents a gap between $\mathcal{C}_{\rm sep}$ and $\mathcal{C}_{\rm ent}$ and thus quantifies the potential of $H$ to detect metrologically useful entanglement. The condition $s(H)>1$ is both necessary and sufficient for \textit{metrologically useful Hamiltonians}. We call $H_1$ more metrologically useful than $H_2$ if $s(H_1) > s(H_2)$. On the other hand, $H_{\alpha} = \sigma_\alpha \otimes \cdots \otimes \sigma_\alpha$ is an example of not-metrologically useful Hamiltonian, since $s(H_{\alpha}) = 1$, where $\sigma_\alpha$ is Pauli matrix with $\alpha$-axis direction. Also, the case with $H = \eins$ is undefined, since $F_Q(\vr, \eins) = 0$ for any state $\vr$.

\begin{figure}[t]
    \centering
    \includegraphics[width=0.9\columnwidth]{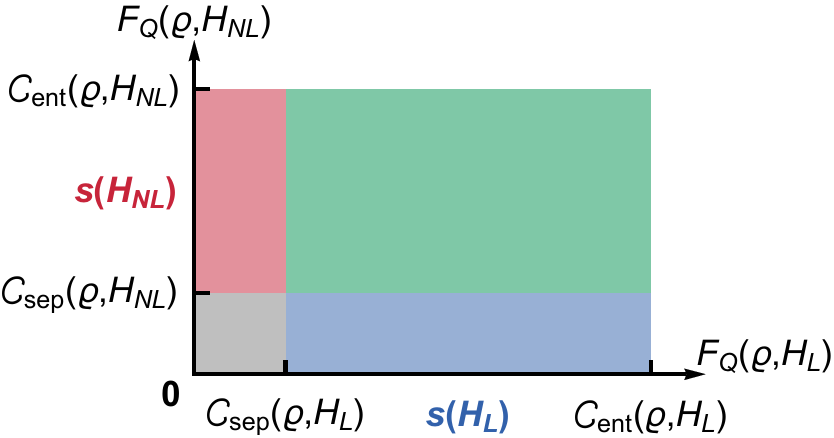}
    \caption{A systematic picture of detection regions for metrologically useful entanglement. The space is coordinated by the quantum Fisher information of linear (horizontal axis) and nonlinear (vertical axis) Hamiltonians.}
    \label{fig:1}
\end{figure}

For linear Hamiltonians, $H_{\rm L} = \sum_{i=1}^N H_i$, where $N$ is the number of particles and $H_i$ are Hamiltonians for individual particles (all $H_i$ are equal), one has $s(H_{\rm L}) = N$~(that will be derived later following Refs.~\cite{giovannetti2006quantum,pezze2009entanglement}). In contrast, for nonlinear Hamiltonians, e.g., $H_{\rm NL} = (\sum_{i=1}^N H_i)^k$, only the scaling behavior is attainable: $s(H_{\rm NL}) = \mathcal{O}(N^{2k})/\mathcal{O}(N^{2k-1}) = \mathcal{O}(N)$, following Refs.~\cite{boixo2007generalized,boixo2008quantumA}, while the exact computation of $\mathcal{C}_{\rm sep}(H_{\rm NL})$ and thus $s(H_{\rm NL})$ is a hard problem. Nonlinear Hamiltonians have been also studied in the context of quantum metrology~\cite{luis2004nonlinear,beltran2005breaking,boixo2007generalized, boixo2008quantum,boixo2008quantumA,choi2008bose,roy2008exponentially,boixo2009quantum,napolitano2011interaction,sewell2014ultrasensitive,beau2017nonlinear} and discussions on the Heisenberg limit~\cite{zwierz2010general, zwierz2012ultimate, gorecki2020pi}.

The goal of this manuscript is to address the following questions:~(i)~whether an ordering relation exists between $s(H_{\rm L})$ and $s(H_{\rm NL})$, therefore determining which of $H_{\rm L}$ and $H_{\rm NL}$ is more metrologically useful;~(ii)~whether there is metrologically useful entanglement detectable by $H_{\rm NL}$ but undetectable by $H_{\rm L}$, or detectable by both operators, unraveling the novel classes of red and green areas shown in Fig.~\ref{fig:1};~(iii)~whether states in the green area typically, or rarely, emerge in the symmetric subspace, as a generalization of the previous result in Ref.~\cite{oszmaniec2016random}.

To proceed, we focus on $H_{\rm NL} \!= \!J_\alpha^k$ for $J_\alpha \!=\! \tfrac{1}{2}\sum_{i=1}^N \sigma_\alpha^{(i)}$, a collective angular momentum along a $\alpha$-axis direction with Pauli matrix $\sigma_\alpha^{(i)}$. First, we outline a general prescription to calculate $\mathcal{C}_{\rm sep}(J_\alpha^k)$ for arbitrary $k$. Especially, we analytically compute $\mathcal{C}_{\rm sep}(J_\alpha^2)$, confirming previous results obtained using numerical methods or restrictions on quantum states~\cite{pezze2016witnessing}. We also derive a general expression for the optimal states reaching $\mathcal{C}_{\rm ent}(J_\alpha^k)$ and find they can always be written as the superposition between the GHZ-like and singlet states for any $k$. Furthermore, we compare metrological usefulness between linear and nonlinear cases based on $s(J_\alpha^k)$. Additionally, we consider specific examples to classify the different regions of Fig.~\ref{fig:1} with noisy states. Finally, we show that most random pure symmetric states achieve $\mathcal{C}_{\rm ent}(J_\alpha^k)$ for all $k$.

\vspace{1em}
{\it Quantum Fisher information.---}The QFI for the unitary transformation $e^{-i\theta H} \vr e^{i\theta H}$ is~\cite{braunstein1994statistical,petz1996monotone}:
\begin{equation}\label{eq:QFI}
    F_Q(\vr, H) = 2\sum_{k,l}
    \frac{(\lambda_k - \lambda_{l})^2}{\lambda_k +\lambda_{l}}
    \lvert \braket{k|H|l} \rvert^2,
\end{equation}
where $\ket{k}$ are eigenvectors of the state $\vr$ with eigenvalues $\lambda_k$ and the sum runs over indices $k,l$ such that $\lambda_k + \lambda_{l} > 0$. The QFI is convex in the state~\cite{fujiwara2001quantum,pezze2009entanglement,toth2014quantum}:~$F_Q\left(\sum_i p_i \ket{\phi_i}\! \bra{\phi_i}, H\right) \leq \sum_i p_i F_Q(\ket{\phi_i}, H)$. This implies that the QFI is always maximized by pure states~\cite{braunstein1994statistical,toth2014quantum,pezze2018quantum}.  For a pure state $\ket{\phi}$, the QFI equals the variance:~$F_Q(\ket{\phi}, H)= 4\va{H}_{\ket{\phi}}$, where $\va{H}_\vr = \tr(\vr H^2) - [\tr(\vr H)]^2$~\cite{braunstein1994statistical}. That is, the bounds $\mathcal{C}_{\rm sep}(H)$ and $\mathcal{C}_{\rm ent}(H)$ in Eqs.~(\ref{eq:SQL})~and~(\ref{eq:HL}) are computed as the maximum variance over pure states.

\vspace{1em}
{\it Computation of $\mathcal{C}_{\rm sep}(H)$.---}An $N$-partite pure state is fully separable if $\ket{\phi_{\rm sep}} = \bigotimes_{i=1}^N \ket{\phi_i}$. In general, a mixed state is fully separable if it can be written as $\vr_{\rm sep} = \sum_k p_k \ket{\phi_{\rm sep}^{(k)}} \! \bra{\phi_{\rm sep}^{(k)}}$, where the $p_k$ form a probability distribution~\cite{horodecki2009quantum,guhne2009entanglement}. For the linear Hamiltonian $H_{\rm L} = \sum_{i=1}^N H_i$, the separability bound $\mathcal{C}_{\rm sep}(H_{\rm L})$ in Eq.~(\ref{eq:SQL}) can be attained only by a symmetric state $\ket{\phi_{\rm sep}^\ast} = \ket{\phi}^{\otimes N}$, i.e., $\mathcal{C}_{\rm sep}(H_{\rm L}) = 4N \max_{\ket{\phi}} \, \va{H_i}_{\ket{\phi}}$~\cite{gessner2016efficient}. This greatly simplifies the calculation, yielding the existing bound $\mathcal{C}_{\rm sep}(J_\alpha) = N$~\cite{pezze2009entanglement}. Note that this derivation relies on the additivity of the QFI~\cite{toth2014quantum}.

For nonlinear Hamiltonians, however, computing the separability bound is more involved as the optimal separable state is not necessarily symmetric, e.g., for $H = J_x^2 + J_y^2, \, {\rm and}\, J_x^2 + J_y^2 + J_z^2$, as we numerically verified. Also, the additivity is not available for nonlinear Hamiltonians.
We summarize our results:
\begin{observation}\label{ob:sepbounds}
    Consider an $N$-qubit Hamiltonian $H_{\rm NL} = J_\alpha^k$.
    (a) For $k=2$, the separability bound $\mathcal{C}_{\rm sep}(J_\alpha^2)$ in Eq.~(\ref{eq:SQL}) can be attained only by a symmetric state. The explicit form of $\mathcal{C}_{\rm sep}(J_\alpha^2)$ for any direction $\alpha$ is given by
    \begin{equation}
        \mathcal{C}_{\rm sep}(J_\alpha^2)
        = \frac{(N-1)^3 N}{2 (2 N-3)},
        \label{eq:sepbound2}
    \end{equation}
    for $N\geq 3$.
    (b) For $k\geq 3$, the separability bound $\mathcal{C}_{\rm sep}(J_\alpha^k)$ may be attained by a symmetric state, supported by numerical evidence up to $N=8$ and $k=5$. In particular, the explicit form of $\mathcal{C}_{\rm sep}(J_\alpha^3)$ for any $\alpha$ may be given by
    \begin{equation}
        \mathcal{C}_{\rm sep}(J_\alpha^3)
        =
        \frac{9 N^5-18 N^4 -b + c_1 + c_2}{216},
        \label{eq:sepbound3}
    \end{equation} 
    where $b = 120 N^3 + 180 N^2 +1020 N$ and the explicit forms of $c_1, c_2$ are given in Appendix~\ref{ap:additional_notes_sepbounds} in the Supplemental Material~\cite{supplmaterial}.
\end{observation}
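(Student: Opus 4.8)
\emph{Proof strategy.} The plan is to turn the optimization defining $\mathcal{C}_{\rm sep}(J_\alpha^k)$ into a one-dimensional calculus problem. Since $J_\alpha = W\,J_z\,W^\dagger$ for a suitable single-qubit unitary $W$ and $W^{\otimes N}$ maps fully separable states to fully separable states, $\mathcal{C}_{\rm sep}(J_\alpha^k)=\mathcal{C}_{\rm sep}(J_z^k)$ for every axis $\alpha$, so it suffices to treat $\alpha=z$. By convexity of the QFI and $F_Q(\ket{\phi},H)=4\va{H}_{\ket{\phi}}$, one has $\mathcal{C}_{\rm sep}(J_z^k)=4\max\va{J_z^k}$ over \emph{pure product} states $\ket{\phi_{\rm sep}}=\bigotimes_{i=1}^N\ket{\phi_i}$. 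On such a state each $\sigma_z^{(i)}$ is a $\{\pm1\}$-valued observable with mean $r_i=\braket{\phi_i|\sigma_z^{(i)}|\phi_i}\in[-1,1]$, and these observables are statistically independent, so the probability distribution of $J_z$ coincides with that of $Y=\tfrac12\sum_{i=1}^N X_i$, where the $X_i$ are independent $\{\pm1\}$ variables with $\mean{X_i}=r_i$. Hence
\begin{equation}
  \mathcal{C}_{\rm sep}(J_z^k)=4\max_{\vec{r}\in[-1,1]^N}\bigl(\mean{Y^{2k}}-\mean{Y^{k}}^2\bigr),
\end{equation}
a polynomial optimization over the Bloch-$z$ components $r_i$ alone; the transverse parts of the $\ket{\phi_i}$ do not enter, so once the $r_i$ are forced to coincide, choosing those transverse parts equal as well produces a symmetric optimizer $\ket{\phi}^{\otimes N}$.

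For part~(a) the core step is to prove that, for $k=2$ and $N\ge3$, this maximum is attained only at $r_1=\dots=r_N$. Here $F(\vec r):=\mean{Y^4}-\mean{Y^2}^2$ is a symmetric quartic polynomial, and a short computation gives $\partial^2 F/\partial r_i^2=-\tfrac12\bigl(\sum_j r_j-r_i\bigr)^2\le0$ (so $F$ is concave in each variable separately) and shows that the stationarity condition $\partial F/\partial r_i=0$ makes every interior $r_i$ a root of one and the same cubic polynomial, whose coefficients depend only on $\sum_j r_j$, $\sum_j r_j^2$, and $\sum_j r_j^3$. Hence at an interior maximizer the biases take at most three distinct values, and the finitely many remaining cases --- two or three distinct interior values, and the boundary faces where some $r_i=\pm1$ and qubit $i$ decouples into a c-number shift of $Y$ --- are eliminated by the second-order conditions together with a direct comparison against the symmetric value below. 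With $r_i\equiv r$ in force the problem is one-dimensional: writing $u=r^2$,
\begin{equation}
  \va{J_z^2}_{\ket{\phi}^{\otimes N}}=\frac{N(N-1)}{8}\bigl[1+2(N-2)u-(2N-3)u^2\bigr],
\end{equation}
which for $N\ge3$ is strictly concave in $u$ with maximizer $u^\ast=\tfrac{N-2}{2N-3}\in(0,1)$; this gives $\va{J_z^2}_{\max}=\tfrac{N(N-1)^3}{8(2N-3)}$ and hence Eq.~\eqref{eq:sepbound2}. The hypothesis $N\ge3$ also removes the degenerate case $N=2$, where $J_z^2$ is essentially a two-body correlator and uniqueness of the optimal biases fails.

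For part~(b) the reduction and the one-dimensional optimization go through verbatim, but the symmetrization step is no longer proved in general; instead I would verify numerically (for $N\le8$, $k\le5$) that the unconstrained maximum over $\vec r$ already lies on the symmetric diagonal, and then optimize under $r_i\equiv r$. For $k=3$, $\mean{Y^6}-\mean{Y^3}^2$ is a cubic polynomial in $u=r^2$, so its stationarity condition is quadratic in $u$; selecting the root in $[0,1]$ that realizes the maximum and substituting it back yields a value whose rational part supplies the terms $9N^5-18N^4-b$ of Eq.~\eqref{eq:sepbound3} and whose surd part (from the square root of a polynomial in $N$) supplies $c_1$ and $c_2$, as worked out in Appendix~\ref{ap:additional_notes_sepbounds}.

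The main obstacle is the symmetrization claim in part~(a): $F$ is permutation-symmetric but not jointly concave in $\vec r$, so there is no one-line convexity argument, and the classification of the (few) stationary configurations together with the boundary analysis is the technical heart of the proof. For $k\ge3$ I do not expect an equally clean statement --- which is why part~(b) is stated with numerical support only --- and even under the symmetric ansatz the one-variable optimization has degree $k$ in $u$, so the closed forms become increasingly unwieldy.
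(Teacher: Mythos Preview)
Your proposal is correct and follows essentially the same route as the paper: reduce $\mathcal{C}_{\rm sep}(J_\alpha^k)$ to maximizing the variance polynomial over the Bloch components $r_i\in[-1,1]$, observe for $k=2$ that the stationarity condition $\partial F/\partial r_i=0$ makes each $r_i$ a root of one common cubic (hence at most three distinct values), eliminate the non-symmetric stationary configurations, and then solve the one-variable symmetric problem to obtain Eq.~\eqref{eq:sepbound2}. The paper reaches the same cubic via a Lagrange-multiplier device on the power sums $p_m=\sum_i r_i^m$ and certifies the symmetric maximum by computing the full Hessian there (Appendix~\ref{ap:additional_notes_sepbounds}), whereas you differentiate directly, recast the variance probabilistically as $\mathrm{Var}(Y^2)$ for $Y=\tfrac12\sum_i X_i$ with independent $\pm1$ variables, note the coordinate-wise concavity $\partial^2F/\partial r_i^2=-\tfrac12(\sum_j r_j-r_i)^2$, and handle the boundary faces explicitly by the shift $Y\mapsto Y\pm\tfrac12$; these are streamlining choices rather than a genuinely different strategy, and for part~(b) both you and the paper rely on the same numerical evidence plus the symmetric one-variable optimization.
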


The proof of Observation~\ref{ob:sepbounds} is given below. The bounds $\mathcal{C}_{\rm sep}(J_\alpha^k)$ can be saturated by $\ket{\phi_{\rm sep}^\ast} \!=\! [\cos(\theta_\ast)\ket{\alpha_+} \!+\! \sin(\theta_\ast) \ket{\alpha_-}]^{\otimes N}$ where the eigenstates $\ket{\alpha_{\pm}}$ with $\pm 1$ eigenvalues of Pauli matrix $\sigma_\alpha$ and $\theta_\ast$ depends on $N$ and $k$. For $k\!=\!2$, $\theta_\ast \!=\! (1/4) \sec^{-1}(3-2N)$ approaches ${\pi}/{8}$ for large $N$ [$\cos({\pi}/{8})\! \approx \! 0.924$]. For $k\!=\! 3$, the explicit form may not be available, but $\cos(\theta_\ast)\! \approx \! 0.953$ for large $N$.

\begin{proof}
    The separability bound $\mathcal{C}_{\rm sep}(J_\alpha^k)$ is found by maximizing $\va{J_\alpha^k}_{\ket{\phi_{\rm sep}}}$ over separable states $\ket{\phi_{\rm sep}} =\bigotimes_{i=1}^N \ket{\phi_i}$. In principle, one can express $\va{J_\alpha^k}_{\ket{\phi_{\rm sep}}}$ as a polynomial: $\va{J_\alpha^k}_{\ket{\phi_{\rm sep}}} = P_k(\alpha_1, \ldots, \alpha_N) \equiv P_k(\vec{\alpha})$ with $\alpha_i = \braket{\phi_i|\sigma_\alpha^{(i)}|\phi_i} \in [-1,1]$ for $1 \leq i \leq N$. 
    
    (a) For $k=2$, we find
    \begin{align}\nonumber
        P_2(\vec{\alpha})
        &=\frac{1}{8}\Bigl[
        N(N-1) + 2(N-2) \sum_{i\neq j}\alpha_i \alpha_j
        \\
        &\quad
        - \sum_{i \neq j} \alpha_i^2 \alpha_j^2
        - 2 \sum_{i \neq j \neq k} \alpha_i^2 \alpha_j \alpha_k
        \Bigr],\label{eq:polyk=2}
    \end{align}
    where we used the expansion of $J_\alpha^2, J_\alpha^4$ in Appendix~\ref{ap:additional_notes_sepbounds}, and $i \! \neq \! j \! \neq \! k$ means $i \! \neq \! j$, $j \! \neq \! k$, and $k\! \neq \! i$. Importantly, since $P_2(\vec{\alpha})$ is symmetric under any exchange of $\alpha_i$ and $\alpha_j$, the fundamental theorem of symmetric polynomials~\cite{macdonald1998symmetric} states that it can be written as a polynomial of $p_{m} \!=\! \sum_{i=1}^N \alpha_i^m$. We apply the method of Lagrange multipliers to maximize $P_2(\vec{\alpha})$ by considering the Lagrangian
    \begin{equation}
        \mathcal{L}(\vec{\alpha}, \kappa_1, \kappa_2, \kappa_3)
        = P_2(\vec{\alpha})
        +\!\! \sum_{m=1,2,3} \! \! \kappa_m \biggl(\sum_{i=1}^N \alpha_i^m - p_m \biggr),
    \end{equation}
    where $\kappa_m$ are Lagrange multipliers. The stationary points of the Lagrangian are given by
    \begin{subequations}
    \begin{align}\label{eq:derivativealpha}
        \frac{\partial \mathcal{L}}{\partial \alpha_x}\!=\!0
        &\Rightarrow 
        \frac{\partial P_k(\vec{\alpha})}{\partial \alpha_x}
        \!+\! \kappa_1 \!+\! 2 \kappa_2 \alpha_x \!+\! 3 \kappa_3 \alpha_x^2 =0,
        \\
        \frac{\partial \mathcal{L}}{\partial \kappa_m}\!=\!0
        &\Rightarrow 
        \sum_{i=1}^N \alpha_i^m = p_m,
        \label{eq:derivativekappa}
    \end{align}
    \end{subequations}
    for $x \in [1,N]$ labeling the particle and $m=1,2,3$. Summing up Eq.~(\ref{eq:derivativealpha}) for all $x$ and using Eq.~(\ref{eq:derivativekappa}) yields 
    \begin{equation}\label{eq:kappa1}
        \kappa_1 = \frac{N p_3 - 3 (\kappa_3 p_2 + p_3) + K_1 p_1 -p_1^3}{N},
    \end{equation}
    where $K_1 = 2 - 2 \kappa_2 + 4 p_2 - N (3 - N + p_2)$. Then substituting Eq.~(\ref{eq:kappa1}) into Eq.~(\ref{eq:derivativealpha}) yields the equation
    \begin{align}\nonumber
        &\!\! 3 \alpha _x^3 -3 p_1 \alpha _x^2
        +(p_1^2-p_2+N-2) \alpha_x
        +2\kappa_2 \Bigl(\alpha_x-\frac{p_1}{N}\Bigr)
        \\
        &\!\!\!\!
        +\! 3\kappa_3 \Bigl(\alpha _x^2-\frac{p_2}{N}\Bigr)
        -\frac{p_1 (N-2-4 p_2)+p_1^3+3 p_3}{N}
        \!=\! 0.\label{eq:equationalphaderivative}
    \end{align}
    By taking Eq.~(\ref{eq:equationalphaderivative}) for different $x \! \neq \! y  \! \neq \! z$ ($y,z\!=\!1,\ldots,N$), one can eliminate $\kappa_2$ and $\kappa_3$ to find the relations
    \begin{subequations}
        \begin{align}\label{eq:equalcond}
        \alpha_x \!=\! \frac{p_1}{N},
        \qquad
        &\alpha_x \!=\! \pm \frac{\sqrt{p_2}}{\sqrt{N}},
        \\
        \label{eq:nonequalcond}
        \alpha _x \!=\! \frac{p_1 \alpha_y \!-\! p_2}{N \alpha_y \!-\! p_1},
        \
        &\alpha_x \!=\!
        \frac{p_1 \alpha_y \alpha_z \!-\! p_2 (\alpha_y \!+\! \alpha_z) \!+\! p_3}
        {N \alpha_y \alpha_z \!-\! p_1 (\alpha_y \!+\! \alpha_z) \!+\! p_2}.
        \end{align}
    \end{subequations}
    
    Accordingly, Eq.~(\ref{eq:equalcond}) leads to the symmetric solution $\vec{\alpha}_{\ast} = (\alpha_{\ast}, \ldots, \alpha_{\ast})$, i.e., all variables are equal. This implies the nontrivial stationary points satisfying Eq.~(\ref{eq:derivativealpha}):
    \begin{equation}
        \alpha_\ast =  \pm \sqrt{\frac{(N-2)}{(2 N-3)}}.
        \label{eq:soultionslphak2}
    \end{equation}
    In Appendix~\ref{ap:additional_notes_sepbounds}, we show that the Hessian matrix of $P_2(\vec{\alpha})$ is negative-semidefinite at the stationary points $\vec{\alpha}_{\rm max} \!=\! (\alpha_{\ast}, \ldots, \alpha_{\ast})$ with $\alpha_\ast$ in Eq.~(\ref{eq:soultionslphak2}), meaning that the maximum of $P_2(\vec{\alpha})$ can be attained by $\vec{\alpha}_{\rm max}$. This yields the bound $\mathcal{C}_{\rm sep}(J_\alpha^2)$ in Eq.~(\ref{eq:sepbound2}) in Observation~\ref{ob:sepbounds}. On the other hand, Eq.~(\ref{eq:nonequalcond}) can lead to the cases where $\vec{\alpha}_{12} \!=\! (\alpha_{r_1}, \ldots, \alpha_{r_1}, \alpha_{r_2}, \ldots, \alpha_{r_2})$ or $\vec{\alpha}_{123} \!=\! (\alpha_{r_1}, \ldots, \alpha_{r_1}, \alpha_{r_2}, \ldots, \alpha_{r_2}, \alpha_{r_3}, \ldots, \alpha_{r_3})$, for $r_i$-times $\alpha_{r_i}$ with $r_i \! \in \! [1,N-1]$ ($i \!=\!1,2,3$). These can yield that $P_2(\vec{\alpha}) \! = \! 0$ attained by $r_1, r_2 \! = \! 1, -1$ or $r_1, r_2, r_3 \! = \! 1, 0, -1$.
    
    (b) For a general $k$, the computation of $P_k(\vec{\alpha})$ is more involved. Nevertheless, it has the following properties:
    (i)~real and non-negative;
    (ii)~symmetric under any exchange of $\alpha_i$ and $\alpha_j$;
    (iii)~non-homogeneous with degree $2k$; and
    (iv)~each $\alpha_i$ has at most a second power.

    To maximize $P_k(\vec{\alpha})$, all first derivatives must be zero. Utilizing property~(iv), one can express the condition as:
    \begin{equation}\label{eq:derivativezero}
        \frac{\partial P_k(\vec{\alpha})}{\partial \alpha_x}=0
        \Rightarrow
        \alpha_x = Q_k (\alpha_1, \ldots, \Hat{\alpha}_x, \ldots, \alpha_N),
    \end{equation}    
    where $Q_k (\alpha_1, \ldots, \Hat{\alpha}_x, \ldots, \alpha_N)$ should be symmetric under any exchange of $\alpha_i$ and $\alpha_j$ for $i,j \neq x$ with the hat symbol indicating omission of $\alpha_x$. Since Eq.~(\ref{eq:derivativezero}) must hold for all $x$, a stationary point exists where all variables are equal, implying a symmetric solution as a necessary condition for a maximum. Although not sufficient for a global maximum, we found numerical evidence for this up to $N = 8$ and $k=5$. For $k=3$, we confirmed that the Hessian matrix is negative-semidefinite at the maximum up to $N=15$ via symbolic calculation with Mathematica. See Appendix~\ref{ap:additional_notes_sepbounds} for further details.
\end{proof}

We have several remarks. First, the separability bound in Eq.~(\ref{eq:sepbound2}) agrees with the one first derived in Ref.~\cite{pezze2016witnessing}, where the maximum QFI over separable states was conjectured to be attained by symmetric states. More recent works~\cite{li2019sensitivity,li2023quantum,bhattacharyya2024even,cieslinski2024exploring} have used the same assumption. Second, the bounds in Eqs.~(\ref{eq:sepbound2})~and~(\ref{eq:sepbound3}) scale as $\mathcal{O}(N^{2k-1})$ for $k=2,3$, as discussed in similar proofs for asymptotic cases~\cite{boixo2007generalized,boixo2008quantumA} and small $N$~\cite{cieslinski2024exploring}. Higher-order cases for $k \geq 4$ may be also computed based on Observation~\ref{ob:sepbounds}. Third, more mathematically, our concern involves determining whether a real, non-negative, non-concave, non-homogeneous symmetric polynomial $f(x_1, \ldots, x_N)$, achieves global extrema at $x_1 = \cdots = x_N$. While this is not generally true, such a symmetric solution typically represents local extrema, according to the Purkiss principle~\cite{waterhouse1983symmetric}. Finally, similar to Observation~\ref{ob:sepbounds}, the closest fully separable state to any symmetric entangled state is also symmetric concerning the geometric measure~\cite{hubener2009geometric}.

\vspace{1em}
{\it Computation of $\mathcal{C}_{\rm ent}(H)$.---}Computing $\mathcal{C}_{\rm ent}(H)$ in Eq.~(\ref{eq:HL}) is much simpler than $\mathcal{C}_{\rm sep}(H)$, using the fact that the variance of a general Hamiltonian $H$ is bounded by $\va{H}_{\vr} \! \leq \! (1/4)(h_{\rm max} \!-\! h_{\rm min})^2$, where $h_{\rm max/min}$ is the max/min eigenvalue of $H$~\cite{textor1978theorem} (for reader convenience, we give this proof in Appendix~\ref{textortheorem_proof_useful}). In quantum metrology, this bound was used first in Ref.~\cite{giovannetti2006quantum} for linear Hamiltonians and in Refs.~\cite{boixo2007generalized,roy2008exponentially} for nonlinear ones. This yields $\mathcal{C}_{\rm ent}(H) \!=\! (h_{\rm max} - h_{\rm min})^2$, i.e., for $H = J_\alpha^k$, we have:
\begin{equation}\label{eq:allbound}
\mathcal{C}_{\rm ent}(J_\alpha^k)
=
\begin{cases}
\frac{N^{2k}}{4^{k-1}},
&\text{odd} \ k \ \text{and} \ \text{all} \ N,
\\
\frac{N^{2k}}{4^{k}},
&\text{even} \ k \ \text{and} \ \text{even} \ N,
\\
\frac{(N^k - 1)^2}{4^{k}},
&\text{even} \ k \ \text{and} \ \text{odd} \ N,
\end{cases}
\end{equation}
where $J_\alpha^{k}$ has $h_{\rm max} = (N/2)^k$, $h_{\rm min} = (-N/2)^k$ for odd $k$, or $h_{\rm min} = 0 \ {\rm or} \ 1/2^k$ for even $k$.

If the max/min eigenvalues of the Hamiltonian $H$ are not degenerate, the state $\ket{\Psi} \equiv (\ket{h_{\rm max}}+ \ket{h_{\rm min}})/\sqrt{2}$, achieves $\mathcal{C}_{\rm ent}(H)$ uniquely, where $\ket{h_{\rm max/min}}$ are the eigenstates with the eigenvalues $h_{\rm max/min}$.  Otherwise, $\ket{\Psi}$ is not uniquely decided. In particular, we have:
\begin{observation}\label{ob:metrologicallymaximallyuseful}    
    Consider an $N$-qubit Hamiltonian $H_{\rm NL} = J_\alpha^k$. The bound $\mathcal{C}_{\rm ent}(J_\alpha^k)$ for any direction $\alpha$ in Eqs.~(\ref{eq:HL})~and~(\ref{eq:allbound}) can be attained by the state
    ~\begin{equation}
    \label{eq:maxusefulstate}
    \ket{\Phi}
    = \sqrt{\lambda_1} \ket{\alpha_+^{N}}
    +\sqrt{\lambda_2}  \ket{\alpha_-^{N}}
    +\sqrt{\lambda_3} \ket{S_N},
\end{equation}
when $\lambda_1 = \lambda_2 = 1/2$ and $\lambda_3 = 0$ for odd $k$, or $\lambda_1 + \lambda_2 = 1/2$ and $\lambda_3 = 1/2$ for even $k$. Here, $\ket{\alpha_{\pm}^{N}} \equiv \ket{\alpha_\pm}^{\otimes N}$ with the eigenstates $\ket{\alpha_\pm}$ with $\pm 1$ eigenvalues of $\sigma_\alpha$ and $\ket{S_N}$ is invariant under any local unitary $U^{\otimes N}$, up to a global phase $\varphi$: $U^{\otimes N}\ket{S_N} = e^{i \varphi}\ket{S_N}$.
\end{observation}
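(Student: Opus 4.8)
\emph{Proof proposal.} The plan is to reduce the claim to two facts already established above: for any Hermitian $H$ one has $\mathcal{C}_{\rm ent}(H)=(h_{\rm max}-h_{\rm min})^2$, and this value is reached by \emph{any} pure state $\ket{\Phi}$ whose spectral decomposition with respect to $H$ puts total weight $1/2$ on the top eigenspace $\mathcal{H}_{\rm max}$ and total weight $1/2$ on the bottom eigenspace $\mathcal{H}_{\rm min}$ (the degenerate version of $\ket{\Psi}=(\ket{h_{\rm max}}+\ket{h_{\rm min}})/\sqrt{2}$). Indeed, such a state has $\ex{H}=(h_{\rm max}+h_{\rm min})/2$ and $\ex{H^2}=(h_{\rm max}^2+h_{\rm min}^2)/2$, so $\va{H}=(h_{\rm max}-h_{\rm min})^2/4$ and $F_Q(\ket{\Phi},H)=4\va{H}=(h_{\rm max}-h_{\rm min})^2=\mathcal{C}_{\rm ent}(H)$; moreover equality in the variance bound forces exactly this support structure. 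Hence it suffices to (i) identify the extremal eigenvalues and eigenspaces of $J_\alpha^k$, and (ii) verify that the state in Eq.~(\ref{eq:maxusefulstate}) with the stated $\lambda_i$ lies in $\mathcal{H}_{\rm max}\oplus\mathcal{H}_{\rm min}$ with the correct weights.

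For step (i) I would diagonalize $J_\alpha$: its spectrum is $\{-N/2,-N/2+1,\dots,N/2\}$, so $J_\alpha^k$ has eigenvalues $m^k$; the states $\ket{\alpha_\pm^N}=\ket{\alpha_\pm}^{\otimes N}$ are the unique eigenvectors with $m=\pm N/2$, and the singlet obeys $\boldsymbol{J}\ket{S_N}=0$, hence $J_\alpha^k\ket{S_N}=0$ for every direction $\alpha$. For odd $k$ the map $m\mapsto m^k$ is strictly increasing, so $h_{\rm max}=(N/2)^k$ and $h_{\rm min}=(-N/2)^k$ are both nondegenerate with eigenvectors $\ket{\alpha_+^N}$ and $\ket{\alpha_-^N}$; by the uniqueness clause the only optimal state of the form~(\ref{eq:maxusefulstate}) is then $\lambda_1=\lambda_2=1/2$, $\lambda_3=0$, and any nonzero $\lambda_3$ would place amplitude on the eigenvalue $0$ of $J_\alpha^k$, strictly between $h_{\rm min}<0$ and $h_{\rm max}>0$, lowering the variance. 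For even $k$ one has $m^k=|m|^k$, so $h_{\rm max}=(N/2)^k$ with $\mathcal{H}_{\rm max}=\mathrm{span}\{\ket{\alpha_+^N},\ket{\alpha_-^N}\}$, while for even $N$ (where $m=0$ occurs) $h_{\rm min}=0$ and $\ket{S_N}\in\mathcal{H}_{\rm min}$.

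For step (ii), the three building blocks carry distinct $J_\alpha$-eigenvalues ($\pm N/2$ and $0$) and are therefore mutually orthogonal, so $\ket{\Phi}$ is normalized as soon as $\lambda_1+\lambda_2+\lambda_3=1$; with $\lambda_1+\lambda_2=1/2$ and $\lambda_3=1/2$ its spectral measure with respect to $J_\alpha^k$ is exactly $\tfrac12\delta_{h_{\rm max}}+\tfrac12\delta_{h_{\rm min}}$, and the opening paragraph yields $F_Q(\ket{\Phi},J_\alpha^k)=(h_{\rm max}-h_{\rm min})^2=\mathcal{C}_{\rm ent}(J_\alpha^k)$, matching Eq.~(\ref{eq:allbound}). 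The freedom $\lambda_1+\lambda_2=1/2$ with an arbitrary split is allowed precisely because both $\ket{\alpha_+^N}$ and $\ket{\alpha_-^N}$ sit in $\mathcal{H}_{\rm max}$.

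The main obstacle is the even-$k$, odd-$N$ case, where $m=0$ is absent from the spectrum: then $h_{\rm min}=1/2^k$ and $\mathcal{H}_{\rm min}$ is the $m=\pm1/2$ sector of $J_\alpha$, which contains no genuine $U^{\otimes N}$-invariant singlet since the trivial $SU(2)$ irrep does not occur for odd $N$. Here $\ket{S_N}$ must be reinterpreted as an admissible surrogate supported on that sector — for instance $\lfloor N/2\rfloor$ singlet pairs tensored with a single qubit polarized along the $\alpha$-axis, or $\ket{S_{N-1}}\otimes(\ket{\alpha_+}+\ket{\alpha_-})/\sqrt{2}$, both of which are $J_\alpha^k$-eigenvectors with eigenvalue $1/2^k=h_{\rm min}$ — after which the identical weight bookkeeping returns $\mathcal{C}_{\rm ent}(J_\alpha^k)=(N^k-1)^2/4^k$. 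Apart from pinning down this surrogate, the argument is routine; the one genuinely load-bearing observation is that all three constituents of $\ket{\Phi}$ are eigenstates of $J_\alpha$, which is exactly what guarantees $\ket{\Phi}$ carries no amplitude on any $J_\alpha^k$-eigenvalue strictly between $h_{\rm min}$ and $h_{\rm max}$.
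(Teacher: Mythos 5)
Your proposal is correct and, at its core, runs along the same lines as the paper's proof: all three constituents of $\ket{\Phi}$ are mutually orthogonal eigenstates of $J_\alpha$ (eigenvalues $\pm N/2$ and $0$), so the variance of $J_\alpha^k$ collapses to an elementary function of the weights, which the stated $\lambda_i$ saturate. The paper does this by direct computation, obtaining $F_Q(\ket{\Phi},J_\alpha^k)=\tfrac{N^{2k}}{4^{k-1}}\{\lambda_1+\lambda_2-[\lambda_1+(-1)^k\lambda_2]^2\}$ and reading off the optimal weights; you instead phrase the same bookkeeping as ``place weight $1/2$ on each extremal eigenspace of $J_\alpha^k$,'' invoking the saturation condition of the bound $\va{H}\leq(h_{\rm max}-h_{\rm min})^2/4$. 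Your packaging buys two things the paper's one-line calculation does not state. First, the converse direction: optimality \emph{forces} the support structure, which is what actually justifies that $\lambda_3=0$ is necessary for odd $k$ rather than merely sufficient. Second, and more substantively, you explicitly handle the even-$k$, odd-$N$ case. There the paper's formula gives $N^{2k}/4^k$ at $\lambda_3=1/2$, which does not reproduce the claimed value $(N^k-1)^2/4^k$ of Eq.~(\ref{eq:allbound}); the reason is that no $U^{\otimes N}$-invariant $\ket{S_N}$ exists for odd $N$, so the computation is vacuous there. Your surrogate (e.g.\ $(N-1)/2$ singlet pairs tensored with $\ket{\alpha_+}$, an eigenvector of $J_\alpha^k$ with eigenvalue $1/2^k=h_{\rm min}$) correctly repairs the statement and recovers $(h_{\rm max}-h_{\rm min})^2=(N^k-1)^2/4^k$. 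This is a case the paper's proof silently omits, and your version closes it.
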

The proof of Observation~\ref{ob:metrologicallymaximallyuseful} is given below. The state $\ket{S_N}$, called the singlet state~\cite{cabello2003solving,cabello2003supersinglets,toth2007optimal,adamson2008detecting,urizar2013macroscopic,bernards2024multiparticle}, is also known as the Werner state in quantum information~\cite{werner1989quantum,vollbrecht2001entanglement,eggeling2001separability}, or ground states of anti-ferromagnetic Heisenberg spin systems in condensed matter~\cite{ueda1996plaquette,miyahara1999exact,eckert2008quantum,keilmann2008dynamical,lubasch2011adiabatic}. In the simplest case $N=2$, there exists only one singlet state: $\ket{S_2} = (\ket{01}-\ket{10})/\sqrt{2}$ (for other examples, see Appendix~\ref{textortheorem_proof_useful}). For any even $N$, singlet states form a linear subspace with the dimension $d(N) = N!/[(N/2)! (N/2 + 1)!]$, referred to as the decoherence-free subspace~\cite{zanardi1997noiseless,lidar1998decoherence,lidar2003decoherence}. 
\begin{proof}
    We note that $\ket{S_N}$ are simultaneous eigenstates of $J_\alpha$ for any direction $\alpha$ with zero eigenvalues: $J_\alpha \ket{S_N} = 0$. This implies the orthogonal relation, $\braket{S_N|\alpha_{\pm}^N} = 0$, since $\ket{\alpha_{\pm}^N}$ are the eigenstates of $J_\alpha$ with eigenvalues $\pm N/2$. Using the normalization condition $\sum_{i=1}^3\lambda_i = 1$ yields
    \begin{equation}\label{eq:QFIMMU}
        F_Q(\ket{\Phi}, J_\alpha^k)
        = \frac{N^{2k}}{4^{k-1}}\!
        \left\{\lambda_1 + \lambda_2
        - [\lambda_1 + (-1)^k \lambda_{2}]^2
        \right\}.
    \end{equation}
    Thus we can confirm that the choice of $\lambda_i$ in Observation~\ref{ob:metrologicallymaximallyuseful} reaches the optimal bound $\mathcal{C}_{\rm ent}(J_\alpha^k)$ for any $\alpha$. 
\end{proof}

For odd $k$, the GHZ state $\ket{\text{GHZ}} = (\ket{0}^{\otimes N} + \ket{1}^{\otimes N})/\sqrt{2}$ attains the bound $\mathcal{C}_{\rm ent}(J_z^k)$. This was first discussed for $k=1$ in Refs.~\cite{giovannetti2006quantum}, and recently extended to odd $k$ in Ref.~\cite{bhattacharyya2024even}. For even $k$, the QFI of GHZ states vanishes, because both $\ket{0}^{\otimes N}$ and $\ket{1}^{\otimes N}$ become eigenstates of $J_z^k$ with eigenvalues $(N/2)^k$. This shows that useful entanglement in linear metrology may not be necessarily useful in nonlinear cases. On the other hand, for even $k$, a superposition between the singlet state and other states, i.e., $\ket{\Phi(\lambda)} = \sqrt{\lambda}\ket{0}^{\otimes N}+\sqrt{{1}/{2} - \lambda}\ket{1}^{\otimes N} + \sqrt{{1}/{2}} \ket{S_N}$, attains the bound $\mathcal{C}_{\rm ent}(J_\alpha^k)$, for any $\lambda \in [0,1/2]$. In particular, for $\lambda_1 = \lambda_2 = 1/4$, the state $\ket{\Phi(1/4)} = \tfrac{1}{\sqrt{2}} \left(\ket{\text{GHZ}} + \ket{S_N}\right)$ has $F_Q(\ket{\Phi(1/4)}) = \left(\frac{1}{2} \delta_{k, {\text{odd}}} +  \delta_{k, {\text{even}}} \right) \mathcal{C}_{\rm ent}(J_z^k)$. This illustrates an interesting case of very useful entanglement in both linear and nonlinear metrology.

\vspace{1em}
{\it Computation of $s(H)$.---}As noted in the introduction, any linear Hamiltonian $H_{\rm L} = \sum_{i=1}^N H_i$ (all $H_i$ are equal) has $s(H_{\rm L}) = N$. This follows from the relation $h_{\rm max/min} = N \mathfrak{h}_{\rm max/min}$ for $\mathfrak{h}_{\rm max/min}$ being the max/min eigenvalues of $H_i$ and then $\mathcal{C}_{\rm sep}(H_{\rm L})= N(\mathfrak{h}_{\max}- \mathfrak{h}_{\rm min})^2$ and $\mathcal{C}_{\rm ent}(H_{\rm L}) = N^2(\mathfrak{h}_{\rm max}- \mathfrak{h}_{\rm min})^2$~\cite{giovannetti2006quantum,pezze2009entanglement}. A key aspect of our work is to obtain the exact values of $\mathcal{C}_{\rm sep}(J_\alpha^k)$ and $\mathcal{C}_{\rm ent}(J_\alpha^k)$. This is necessary to discuss the metrological usefulness of $J_\alpha^k$. We summarize our results:
\begin{observation}\label{ob:sfractionValue}    
    Consider an $N$-qubit Hamiltonian $H_{\rm NL} = J_\alpha^k$. Letting $s_k \equiv s(J_\alpha^k)$ in Eq.~(\ref{eq:sfraction}), we have that $s_2 < s_1 < s_3$ for $3 \leq N \leq 6$ and $s_2 < s_3 < s_1$ for $N \geq 7$, where $s_1 = N$ and the explicit forms of $s_2, s_3$ are given in Appendix~\ref{ap:additional_notes_sfraction}.
\end{observation}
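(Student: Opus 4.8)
The plan is to reduce each claimed ordering to an elementary inequality in the single integer $N$. Substituting the closed forms from Observations~\ref{ob:sepbounds} and~\ref{ob:metrologicallymaximallyuseful} into $s_k = \mathcal{C}_{\rm ent}(J_\alpha^k)/\mathcal{C}_{\rm sep}(J_\alpha^k)$ gives $s_1 = N$ exactly; from Eqs.~(\ref{eq:sepbound2}) and~(\ref{eq:allbound}), $s_2 = N^3(2N-3)/[8(N-1)^3]$ for even $N$ and $s_2 = (N+1)^2(2N-3)/[8N(N-1)]$ for odd $N$; and from Eq.~(\ref{eq:sepbound3}) together with $\mathcal{C}_{\rm ent}(J_\alpha^3) = N^6/16$, one has $s_3 = 27 N^6/[2(9 N^5 - 18 N^4 - b + c_1 + c_2)]$, a rational function of $N$ and of the (radical) quantities $c_1,c_2$ recorded in Appendix~\ref{ap:additional_notes_sfraction}. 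All denominators occurring here are strictly positive for $N\geq 3$, so every comparison $s_i \lessgtr s_j$ becomes, after cross-multiplication, a statement about the sign of an explicit polynomial (or polynomial-in-radicals) in $N$.

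First I would handle $s_2 < s_1$. Cross-multiplying by the positive denominators, the claim is equivalent to $6N^3 - 21N^2 + 24N - 8 > 0$ for even $N$ and $6N^3 - 9N^2 + 4N + 3 > 0$ for odd $N$. Both cubics are positive at $N=3,4$ and have positive derivative for $N\geq 2$, hence are positive for all $N\geq 3$, establishing $s_2 < s_1$ for every $N\geq 3$. For the pair $s_2$ versus $s_3$, the same cross-multiplication with the explicit forms above reduces $s_2 < s_3$ to a one-variable inequality in $N$ involving $c_1,c_2$; in the range $3\leq N\leq 6$ it also follows for free from $s_2<s_1$ once $s_1<s_3$ is shown there.

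The heart of the matter is the crossover $s_1 \lessgtr s_3$ at $N=7$. Since $\mathcal{C}_{\rm sep}(J_\alpha^3)>0$, the sign of $s_3 - N$ coincides with that of $N^6/16 - N\,\mathcal{C}_{\rm sep}(J_\alpha^3)$, and multiplying by $216/N>0$ this is the sign of
\begin{equation}
    g(N) \;\equiv\; \tfrac{9}{2}N^5 + 18 N^4 + b - c_1 - c_2, \qquad b = 120N^3 + 180N^2 + 1020N.
\end{equation}
I would then (i) evaluate $g(6)$ and $g(7)$ to exhibit $g(6)>0>g(7)$, i.e.\ $s_3>s_1$ at $N=6$ and $s_3<s_1$ at $N=7$; and (ii) show that $s_3(N)/N$ is strictly decreasing for $N\geq 3$ — equivalently that $g(N)/N^5$ is strictly increasing, using that $c_1+c_2$ carries a leading term exceeding $\tfrac92 N^5$ (a consequence of the cubic critical-point equation behind $\theta_\ast$ in Observation~\ref{ob:sepbounds}, which forces the limit of $\mathcal{C}_{\rm sep}(J_\alpha^3)/N^5$ into the interval $(1/16,1/4)$, so that $s_3/N$ tends to a value strictly between $1/4$ and $1$). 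Monotonicity then upgrades the two sample points to $s_1<s_3$ for all $3\leq N\leq 6$ and $s_3<s_1$ for all $N\geq 7$. Combining this with $s_2<s_1$ and $s_2<s_3$ yields $s_2<s_1<s_3$ for $3\leq N\leq 6$ and $s_2<s_3<s_1$ for $N\geq 7$, as claimed.

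The main obstacle is step (ii): because $\mathcal{C}_{\rm sep}(J_\alpha^3)$ enters only through the irrational $c_1,c_2$, proving the monotonicity of $s_3(N)/N$ cleanly requires either rationalizing these radicals or, more practically, sandwiching $\mathcal{C}_{\rm sep}(J_\alpha^3)$ between two simple polynomials in $N$ — for instance using that $\cos\theta_\ast$ stays in a narrow interval around $0.953$ for all $N$ — that are nonetheless sharp enough to keep $s_3$ on the correct side of $N$ on each half-line. Should a fully analytic monotonicity argument become unwieldy, the fallback, in the spirit of the finite symbolic checks used for the Hessians in Observation~\ref{ob:sepbounds}, is to combine such polynomial bounds, which settle all large $N$, with direct evaluation at the finitely many small values of $N$ left uncovered.
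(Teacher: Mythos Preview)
Your approach and the paper's are the same in substance: both rest on the closed-form expressions for $s_1,s_2,s_3$ obtained by combining Eqs.~(\ref{eq:sepbound2}), (\ref{eq:sepbound3}), and (\ref{eq:allbound}). The paper, however, does not spell out the comparison at all --- Appendix~\ref{ap:additional_notes_sfraction} simply records the explicit $s_2,s_3$ (identical to yours) and displays Fig.~\ref{fig:7}; the orderings are treated as routine once the formulas are in hand. Your write-up is therefore considerably more detailed than what the paper offers, and your reductions of $s_2<s_1$ to the cubics $6N^3-21N^2+24N-8>0$ (even $N$) and $6N^3-9N^2+4N+3>0$ (odd $N$) are correct and give a clean argument the paper does not bother to include.

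One small slip to fix: with $g(N)=\tfrac{9}{2}N^5+18N^4+b-c_1-c_2$ and $D=9N^5-18N^4-b+c_1+c_2=216\,\mathcal{C}_{\rm sep}(J_\alpha^3)$ you have $g(N)=\tfrac{27}{2}N^5-D$, hence $g(N)/N^5=\tfrac{27}{2}-D/N^5$. Since $s_3/N=(27/2)\big/(D/N^5)$, the correct equivalence is that $s_3/N$ is strictly \emph{decreasing} iff $D/N^5$ is strictly \emph{increasing} iff $g(N)/N^5$ is strictly \emph{decreasing}, not increasing as you wrote. This does not affect the logic of your plan, only the stated direction.

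Your honest flag on step~(ii) is well placed: the paper does not supply any argument for the large-$N$ regime beyond the explicit formula, so the difficulty you identify --- controlling $c_1+c_2$ uniformly in $N$ --- is real and simply not addressed there. Your proposed fallback (polynomial sandwiching of $\mathcal{C}_{\rm sep}(J_\alpha^3)$ via bounds on $\cos\theta_\ast$, plus finite checks) is a reasonable way to close the gap and would make the statement genuinely proved rather than asserted.
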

Observation~\ref{ob:sfractionValue} implies that for large $N$, the linear Hamiltonian $J_\alpha$ is more metrologically useful than the nonlinear ones $J_\alpha^2$ and $J_\alpha^3$.  This addresses the question~(i)~posed in the introduction. Interestingly, it also disproves the presence of a hierarchical order $s_k > s_{k+1}$. On the other hand, we collect numerical evidence that another hierarchical order $s_{k} > s_{k+2}$ exists for a large $N$, for details see Appendix~\ref{ap:additional_notes_sfraction}. We leave this as a conjecture for further research. Moreover, in Appendix~\ref{ap:complicatedHam}, we discuss $\mathcal{C}_{\rm sep}(H_{\alpha, \beta})$ and $s(H_{\alpha, \beta})$ for the class of Hamiltonians $H_{\alpha, \beta} = \mu J_\alpha + \nu J_\beta^2$ to compare their behaviors for different $\mu, \nu$.

\begin{figure}[t]
     \centering
     \includegraphics[width=0.98\columnwidth]{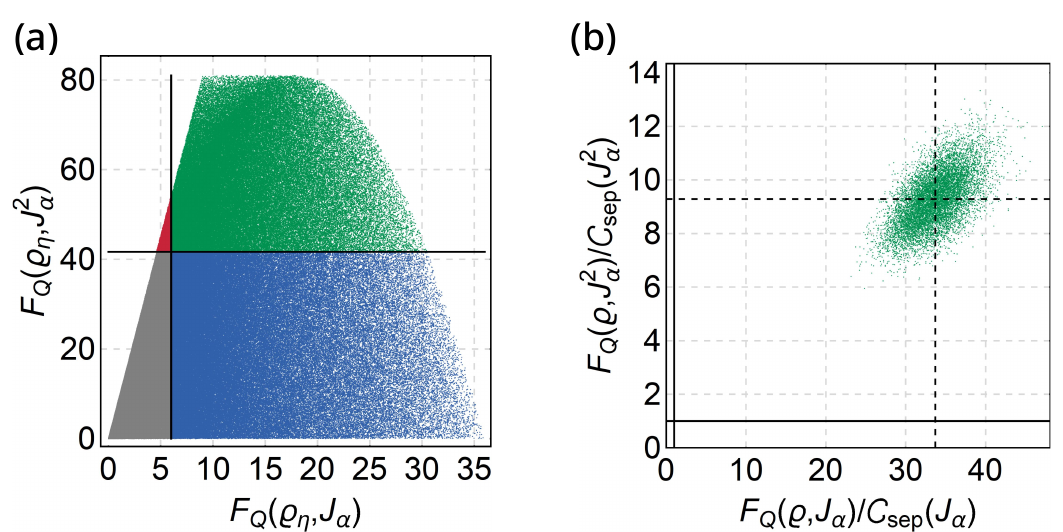}
    \caption{(a) Classification of metrologically useful entanglement for the mixed state $\vr_{\eta}$ in Eq.~(\ref{eq:testmix}) with $N=6$, in the space coordinated by $F_Q(\vr_{\eta}, J_\alpha^k)$ for $k=1,2$. This figure is created from a random sample of $10^6$ points $\lambda_1, \lambda_2, \eta \in [0,1]$ with~$0 \! \leq \!  \lambda_1   +   \lambda_2 \! \leq \! 1$ in Eq.~(\ref{eq:maxusefulstate}).
    (b) Plot of $10^4$ sampled random pure symmetric states for $N=100$ in the space coordinated by $F_Q(\vr, J_\alpha^k)/\mathcal{C}_{\rm sep}(J_\alpha^k)$ for $k=1,2$. Dashed lines are $\Bar{\mathcal{F}}_Q(J_\alpha^k)/\mathcal{C}_{\rm sep}(J_\alpha^k)$ discussed in Observation~\ref{ob:tfractionaverageValue}. In both panels, vertical and horizontal solid lines are respectively $\mathcal{C}_{\rm sep}(J_\alpha) = N$~\cite{pezze2009entanglement} and $\mathcal{C}_{\rm sep}(J_\alpha^2)$ given in Observation~\ref{ob:sepbounds}.
    }
    \label{fig:2}
\end{figure}

\vspace{1em}
{\it Examples.---}We test our bound $\mathcal{C}_{\rm sep}(J_\alpha^2)$ in Observation~\ref{ob:sepbounds} with the noisy-mixed state of $\ket{\Phi}$ in Eq.~(\ref{eq:maxusefulstate}):
\begin{equation}\label{eq:testmix}
    \vr_{\eta} = \eta \ket{\Phi}\!\bra{\Phi} + \frac{1-\eta}{2^N}\eins.
\end{equation}
This state has $F_Q(\vr_{\eta}, J_\alpha^k)\! =\! \tfrac{\eta^2 2^{N-1}}{1+ \eta(2^{N-1}-1)} F_Q(\ket{\Phi}, J_\alpha^k)$ with $F_Q(\ket{\Phi}, J_\alpha^k)$ given in Eq.~(\ref{eq:QFIMMU}).

In Fig.~\ref{fig:2}~(a), we classify useful entanglement in the space by $F_Q(\vr_{\eta}, J_\alpha)$ and $F_Q(\vr_{\eta}, J_\alpha^2)$, with the previous bound $\mathcal{C}_{\rm sep}(J_\alpha) = N$~\cite{pezze2009entanglement} and our bound $\mathcal{C}_{\rm sep}(J_\alpha^2)$ in Eq.~(\ref{eq:sepbound2}). This reveals useful entanglement detectable by our bound but not detectable by the previous one, or detectable by both, thus addressing the question~(ii)~in the introduction.  In Appendix~\ref{ap:spinsqueezing}, we provide more details on the analysis of $\vr_{\eta}$ and compare with entanglement criteria based on $\ex{J_\alpha}$ and $\ex{J_\alpha^2}$~\cite{toth2007optimal,toth2009spin}. There we show that the optimal entangled state $\ket{\Phi(\lambda)}$ is spin-squeezed for any $\lambda \in [0,1/2]$, although the GHZ state is not spin-squeezed.

\vspace{1em}
{\it Average QFI of random symmetric states.---}Finally we discuss the metrological usefulness based on the average QFI of random pure states in the symmetric subspace:
\begin{equation}\label{eq:averageQFIbosonic}
    \Bar{\mathcal{F}}_Q(H)
    \equiv \int d\psi_{\rm sym} \, F_Q(\ket{\psi_{\rm sym}}, H),
\end{equation}
where symmetric states $\ket{\psi_{\rm sym}}$ are chosen via the Haar distribution. In Ref.~\cite{oszmaniec2016random}, the average QFI was shown to exceed the separability bound for linear Hamiltonians: $\Bar{\mathcal{F}}_Q(H_{\rm L}) \geq \mathcal{C}_{\rm sep}(H_{\rm L})$. We summarize our results:
\begin{observation}\label{ob:tfractionaverageValue}    
Consider an $N$-qubit Hamiltonian $H_{\rm NL} = J_\alpha^k$. For large $N$, we obtain that $\Bar{\mathcal{F}}_Q(J_\alpha^k) \propto ({1}/{k}) \mathcal{C}_{\rm ent}(J_\alpha^k)$ (with analytical expression valid for any $k$ and $N$ reported in Appendix~\ref{ap:averageQFI}), leading to the scaling behavior $\mathcal{O}(N^{2k}/k)$. Also, defining the quantity $t_k \equiv \Bar{\mathcal{F}}_Q(J_\alpha^k)/\mathcal{C}_{\rm sep}(J_\alpha^k)$, we have that $t_2 < t_3 < t_1$ for $N \geq 4$.
\end{observation}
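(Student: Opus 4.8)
Proof proposal for Observation~\ref{ob:tfractionaverageValue}.

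The plan is to turn the Haar integral in Eq.~(\ref{eq:averageQFIbosonic}) into a trace. Since $F_Q(\ket{\psi_{\rm sym}},H)=4\va{H}_{\ket{\psi_{\rm sym}}}$ for pure states, $\Bar{\mathcal{F}}_Q(H)=4\int d\psi_{\rm sym}\,\bigl(\braket{\psi_{\rm sym}|H^2|\psi_{\rm sym}}-\braket{\psi_{\rm sym}|H|\psi_{\rm sym}}^2\bigr)$. The symmetric subspace of $N$ qubits has dimension $d=N+1$, and the Haar measure on its pure states is a $2$-design, giving $\int d\psi_{\rm sym}\,\ket{\psi_{\rm sym}}\!\bra{\psi_{\rm sym}}=\eins_{\rm sym}/(N+1)$ and $\int d\psi_{\rm sym}\,(\ket{\psi_{\rm sym}}\!\bra{\psi_{\rm sym}})^{\otimes 2}=(\eins+\swap)/[(N+1)(N+2)]$, with $\swap$ swapping the two copies of the symmetric subspace (the same device used for $H_{\rm L}$ in Ref.~\cite{oszmaniec2016random}). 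Contracting with $H$ restricted to the symmetric subspace yields $\int d\psi_{\rm sym}\braket{\psi_{\rm sym}|H^2|\psi_{\rm sym}}=\tr_{\rm sym}(H^2)/(N+1)$ and $\int d\psi_{\rm sym}\braket{\psi_{\rm sym}|H|\psi_{\rm sym}}^2=[(\tr_{\rm sym}H)^2+\tr_{\rm sym}(H^2)]/[(N+1)(N+2)]$, so
\begin{equation*}
\Bar{\mathcal{F}}_Q(H)=\frac{4}{(N+1)(N+2)}\Bigl[(N+1)\tr_{\rm sym}(H^2)-(\tr_{\rm sym}H)^2\Bigr].
\end{equation*}

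Next I would set $H=J_\alpha^k$. Since $J_\alpha$ preserves the symmetric subspace with the Dicke spectrum $m\in\{-N/2,-N/2+1,\dots,N/2\}$, the traces are power sums, $\tr_{\rm sym}(J_\alpha^{2k})=\sum_m m^{2k}$ and $\tr_{\rm sym}(J_\alpha^{k})=\sum_m m^{k}$, the latter vanishing for odd $k$ by parity. By Faulhaber's formula each power sum is a polynomial in $N$ (of degree $2k+1$, resp.\ $k+1$) which is the \emph{same} polynomial for even and odd $N$; substituting back produces the closed form of $\Bar{\mathcal{F}}_Q(J_\alpha^k)$ reported in Appendix~\ref{ap:averageQFI} (e.g.\ $\Bar{\mathcal{F}}_Q(J_\alpha)=N(N+1)/3$ and $\Bar{\mathcal{F}}_Q(J_\alpha^2)=N(N-1)(N+1)(N+3)/45$). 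For the asymptotics I keep only leading terms, $\sum_m m^{2k}\simeq\tfrac{2}{2k+1}(N/2)^{2k+1}$ and, for even $k$, $\sum_m m^{k}\simeq\tfrac{2}{k+1}(N/2)^{k+1}$; comparing with $\mathcal{C}_{\rm ent}(J_\alpha^k)$ from Eq.~(\ref{eq:allbound}) gives, for large $N$, $\Bar{\mathcal{F}}_Q(J_\alpha^k)\simeq\tfrac{1}{2k+1}\mathcal{C}_{\rm ent}(J_\alpha^k)$ for odd $k$ and $\Bar{\mathcal{F}}_Q(J_\alpha^k)\simeq\tfrac{4k^2}{(2k+1)(k+1)^2}\mathcal{C}_{\rm ent}(J_\alpha^k)$ for even $k$. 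In both cases the proportionality factor is $\Theta(1/k)$, and since $\mathcal{C}_{\rm ent}(J_\alpha^k)=\mathcal{O}(N^{2k})$ this yields the $\mathcal{O}(N^{2k}/k)$ scaling.

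For the ordering $t_2<t_3<t_1$ I would plug in the exact separability bounds: $\mathcal{C}_{\rm sep}(J_\alpha)=N$ gives $t_1=(N+1)/3$, while combining the polynomials $\Bar{\mathcal{F}}_Q(J_\alpha^2),\Bar{\mathcal{F}}_Q(J_\alpha^3)$ above with $\mathcal{C}_{\rm sep}(J_\alpha^2)$ in Eq.~(\ref{eq:sepbound2}) and $\mathcal{C}_{\rm sep}(J_\alpha^3)$ in Eq.~(\ref{eq:sepbound3}) gives $t_2,t_3$ as explicit rational functions of $N$. Each inequality $t_1-t_3>0$ and $t_3-t_2>0$ then becomes, after multiplying through by the relevant $\mathcal{C}_{\rm sep}$'s (positive for $N\ge4$, being maxima of nonnegative variances), the positivity of a single polynomial $p(N)$. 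I would certify $p(N)>0$ for $N\ge4$ by checking a handful of small values together with the observation that the shifted polynomial $p(N+4)$ has nonnegative coefficients (or, if not, by bounding its leading term against the remainder); since every quantity involved is a genuine polynomial in $N$, this covers even and odd $N$ at once.

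The main obstacle is this last step: the closed form of $\mathcal{C}_{\rm sep}(J_\alpha^3)$ carries the bulky constants $c_1,c_2$ of Appendix~\ref{ap:additional_notes_sepbounds}, so clearing denominators in $t_1-t_3>0$ and $t_3-t_2>0$ produces high-degree polynomials whose signs must be certified for all $N\ge4$ -- a bookkeeping-heavy computation best carried out with computer algebra. A secondary caveat is that the value of $\mathcal{C}_{\rm sep}(J_\alpha^3)$ used here rests on the symmetric-optimum claim of Observation~\ref{ob:sepbounds}(b), which is only numerically supported, so the ordering involving $t_3$ inherits that status.
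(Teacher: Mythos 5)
Your proposal follows essentially the same route as the paper's proof: the symmetric-subspace Haar $1$- and $2$-design formulas reduce $\Bar{\mathcal{F}}_Q(J_\alpha^k)$ to the power sums $\sum_m m^{k}$ and $\sum_m m^{2k}$ over the Dicke spectrum, evaluated via Faulhaber's formula, with the large-$N$ asymptotics obtained from the leading terms and the ordering $t_2<t_3<t_1$ certified by plugging in the explicit separability bounds. Your asymptotic prefactors ($1/(2k+1)$ for odd $k$, $4k^2/[(2k+1)(k+1)^2]$ for even $k$) keep the exact $k$-dependence where the paper further approximates $2k+1\approx 2k$ and $k+1\approx k$; both give the claimed $\propto 1/k$ behavior, and your caveat that $t_3$ inherits the numerically supported status of $\mathcal{C}_{\rm sep}(J_\alpha^3)$ is consistent with the paper's own hedging in Observation~\ref{ob:sepbounds}(b).
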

The proof of Observation~\ref{ob:tfractionaverageValue} and the explicit forms of $t_k$ are given in Appendix~\ref{ap:averageQFI}. Observation~\ref{ob:tfractionaverageValue} shows that most random pure symmetric states can not only be useful in nonlinear metrology, but also reach the optimal bound $\mathcal{C}_{\rm ent}(J_\alpha^k)$ with a slowing-down term~$\propto 1/k$, thus addressing the question~(iii)~in the introduction. 
In Appendix~\ref{ap:averageQFI}, we demonstrate that the typical value of the QFI can go beyond the separability bounds for large $N$ by computing the concentration inequality.
Moreover, similar to Observation~\ref{ob:sfractionValue}, $J_\alpha$ is \textit{typically} more metrologically useful than $J_\alpha^2$ and $J_\alpha^3$ over random symmetric states. In Fig.~\ref{fig:2}~(b), we plot random pure symmetric states of $N=100$ qubits in the space coordinated by $F_Q(\vr, J_\alpha^k)/\mathcal{C}_{\rm sep}(J_\alpha^k)$.

\vspace{1em}
{\it Conclusion.---}We present the conditions of metrological usefulness of quantum states and Hamiltonians: these are $F_Q(\vr, H)>\mathcal{C}_{\rm sep}(H)$ and $s(H)>1$, respectively. 
We have provided the separability bounds $\mathcal{C}_{\rm sep}(J_\alpha^k)$ and the form of the optimal state achieving the bounds $\mathcal{C}_{\rm ent}(J_\alpha^k)$. Also, we found the ordering relation of the metrological usefulness $s(J_\alpha^k)$. We applied our separability bounds to characterize entanglement and computed the average QFI of random symmetric states. We finally emphasize that nonlinear Hamiltonians are relevant in several experimental platforms~\cite{pezze2018quantum,browaeys2020many,demille2024quantum}.

There are several directions for further research.  First, since the computation of $\mathcal{C}_{\rm sep}(H_{\rm NL})$ is hard in general, it would be valuable to find the general class of Hamiltonians (that includes $H_{\rm NL} = J_\alpha^k$) where the bound is attained by symmetric states. Second, extending our results to higher dimensions would be also interesting. Third, our results encourage further development of the activation of metrological usefulness~\cite{toth2020activating,trenyi2024activation} or multiparameter metrology~\cite{gessner2018sensitivity,albarelli2020perspective,demkowicz2020multi}.
Finally, while linear Hamiltonians have been used in various domains beyond quantum metrology, such as quantum coherence~\cite{streltsov2017colloquium,marvian2020coherence}, resource theory~\cite{tan2021fisher,marvian2022operational,yamaguchi2023beyond}, quantum battery~~\cite{julia2020bounds,campaioli2024colloquium}, and Zeno dynamics or state's indistinguishability~\cite{smerzi2012zeno}, one may investigate these possibilities also for nonlinear Hamiltonians.

\vspace{1em}
{\it Acknowledgments.---}
We would like to thank Francesco Albarelli, Stefano Gherardini, Vittorio Giovannetti, Otfried G\"uhne, Ties Ohst, Hai-Long Shi, G\'eza T\'oth, and Benjamin Yadin for discussions. This work has received funding under Horizon Europe programme HORIZON-CL4-2022-QUANTUM-02-SGA via the project 101113690 (PASQuanS2.1) and by the European Commission through the H2020 QuantERA ERA-NET Cofund in Quantum Technologies projects ``SQUEIS''.

\onecolumngrid
\appendix
\addtocounter{theorem}{-1}
\section{Additional notes on Observation~\ref{ob:sepbounds}}\label{ap:additional_notes_sepbounds}
Here we first compute the Hessian matrix of the polynomial $P_2(\vec{\alpha})$ in Eq.~(\ref{eq:polyk=2}) in the main text. Second, we provide the explicit form of Eq.~(\ref{eq:sepbound3}) in Observation~\ref{ob:sepbounds} in the main text and continue the explanation of \textit{proof}. Finally, we give the explicit form of $J_\alpha^k$ for $k \in [1,6]$.
\begin{itemize}
    \item 
    Let us show that the Hessian matrix of the polynomial $P_2(\vec{\alpha})$ is negative-semidefinite (i.e., all eigenvalues are non-positive) at the stationary points $\vec{\alpha}_{\rm max} = (\alpha_{\ast}, \ldots, \alpha_{\ast})$ with $\alpha_\ast = \pm \sqrt{\tfrac{(N-2)}{(2 N-3)}}$ in Eq.~(\ref{eq:soultionslphak2}) in the main text. Let us begin by denoting the Hessian matrix as $\mathcal{H}(\vec{\alpha}) = (\mathcal{H}_{xy})$ with the $(x,y)$-element $\mathcal{H}_{xy} = \tfrac{\partial^2 P_2(\vec{\alpha})}{\partial \alpha_x \partial \alpha_y}$. A straightforward calculation leads to
    \begin{equation} \label{eq:twoderivative}
        \mathcal{H}_{xy}
        =
        \frac{1}{2}
        \begin{cases}
        - \sum_{i \neq x} \alpha_i^2
        - \sum_{i \neq j \neq x} \alpha_i \alpha_j,
        \ &x = y,
        \\ 
        N-2 - 2\alpha_x \alpha_{y}
        - \sum_{i \neq x \neq y}\alpha_i
        (2\alpha_x + \alpha_i + 2\alpha_y),
        \ &x \neq y,
        \end{cases}
    \end{equation}
    where we used
    \begin{equation}
       \frac{\partial P_2(\vec{\alpha})}{\partial \alpha_x}
        =\frac{1}{2}\Bigl[
        (N-2) \sum_{i\neq x}\alpha_i
        - \sum_{i \neq x} \alpha_x \alpha_i^2
        - \sum_{i \neq j \neq x} (\alpha_x \alpha_i \alpha_j + \alpha_i^2 \alpha_j)
        \Bigr].
    \end{equation}   
    At the stationary points $\vec{\alpha}_{\rm max} = (\alpha_{\ast}, \ldots, \alpha_{\ast})$, the $(x,y)$-element in Eq.~(\ref{eq:twoderivative}) is given by
    \begin{equation} \label{eq:hessianalpha}
        \mathcal{H}_{xy}
        =
        \frac{1}{2}
        \begin{cases}
        - (N-1)^2 \alpha_\ast^2,
        \ &x = y,
        \\
        (N-2)(1-5\alpha_\ast^2) - 2\alpha_\ast^2
        \ &x \neq y.
        \end{cases}
    \end{equation}
    Inserting the value $\alpha_\ast = \pm \sqrt{\tfrac{(N-2)}{(2 N-3)}}$ into Eq.~(\ref{eq:hessianalpha}), we obtain 
    \begin{equation}\label{eq:twoderivativeequal}
        \mathcal{H}(\vec{\alpha}_{\rm max}) = - (q_N -  q_N^\prime) \eins_{N} - q_N^\prime {\mathbbm{J}}_N,
    \end{equation}
    where ${\mathbbm{J}}_N$ is a $N \times N$ matrix where all the elements are equal to one and
    \begin{equation}
        q_N = \frac{(N-2) (N-1)^2}{2(2N-3)},
        \qquad
        q_N^\prime = \frac{(N-2) (3N-5)}{2(2N-3)}.
    \end{equation}
    Since $\eins_{N}, {\mathbbm{J}}_N$ are positive-semidefinite and $q_N \geq q_N^\prime$ for $N \geq 3$, the Hessian matrix $\mathcal{H}(\vec{\alpha}_{\rm max})$ in Eq.~(\ref{eq:twoderivativeequal}) is negative-semidefinite.
    
    \item 
    Let us provide the explicit form of Eq.~(\ref{eq:sepbound3}) in Observation~\ref{ob:sepbounds} in the main text:
    \begin{subequations}
    \begin{align}\label{eq:sepbound3ap}
    \mathcal{C}_{\rm sep}(J_\alpha^3)
    &= \frac{1}{216}
    \left(
    9 N^5-18 N^4-120 N^3 -180 N^2
    -1020 N + c_1 + c_2 \right),
    \\
    c_1&= 
    \frac{380 (164-71 N)}{3 (N-5) N+20}
    + \frac{12800 (N-1)}{[3 (N-5) N+20]^2}
    -3084,
    \\ 
    c_2
    &=3 \sqrt{\frac{N^2 [N(N c_3 -1440) + 480]^3}
    {(N-2) (N-1) [3 (N-5) N+20]^4}},
    \\
    c_3&= 
    3 N \{N [3 (N-9) N+128]-360\}+1720.
\end{align}
\end{subequations}

    Also, we can continue the explanation of \textit{proof} of Observation~\ref{ob:sepbounds} in the main text as follows: For the case of $k=3$, let us assume that a pure symmetric fully separable state $\ket{\phi_{\rm sep}^\ast}$ can attain the maximum value. Inserting the expanded forms of $J_\alpha^3, J_\alpha^6$ (that will be given in below) into the variance $\va{J_\alpha^3}_{\ket{\phi_{\rm sep}^\ast}}$, we can derive
\begin{align} \nonumber
\va{J_\alpha^3}_{\ket{\phi_{\rm sep}^\ast}}
&= \ex{J_\alpha^6}_{\ket{\phi_{\rm sep}^\ast}}
- \ex{J_\alpha^3}_{\ket{\phi_{\rm sep}^\ast}}^2
\\ \nonumber
&= \frac{N(1-\alpha^2)}{16} \bigl\{
15 N^2 -30 N+16
+3 \alpha^4 (N-2) (N-1) [3 (N-5) N+20]
\\
&\quad
+12 \alpha^2 (N-2) (N-1) (3 N-5)\bigr\},
\end{align}
where we used the facts that $\braket{\phi_{\rm sep}^\ast|\sigma_\alpha^{(i_1)} \sigma_\alpha^{(i_2)} \cdots \sigma_\alpha^{(i_m)} |\phi_{\rm sep}^\ast} = \alpha^m$ for any $m \in [1,N]$, where the Bloch coefficient $\alpha = \braket{\phi|\sigma_\alpha|\phi}$ has the same value $\alpha \in [-1,1]$ for all subsystems. Performing the maximization over $\alpha$, we can arrive at the bound $\mathcal{C}_{\rm sep}(J_\alpha^3)$ in Eq.~(\ref{eq:sepbound3}) in the main text or in Eq.~(\ref{eq:sepbound3ap}). 

    \item 
    For reader convenience, let us provide the explicit form of $J_\alpha^k$ for $k \in [1,6]$, where $J_\alpha = (1/2)\sum_{i=1}^N \sigma_\alpha^{(i)}$. Using the property that $\sigma_\alpha^m = \sigma_\alpha \delta_{m,{\rm odd}} + \eins \delta_{m,{\rm even}}$, we have:
\begin{align}
    J_\alpha^2
    &=\frac{1}{4}\sum_{i,j=1}^N \sigma_\alpha^{(i)}\sigma_\alpha^{(j)}
    =\frac{1}{4}\Biggl\{
    \sum_{i=1}^N (\sigma_\alpha^{(i)})^2
    + \sum_{i\neq j}^N \sigma_\alpha^{(i)} \sigma_\alpha^{(j)}
    \Biggr\}
    = \frac{1}{4}\Biggl\{
    N\eins + \sum_{i\neq j}^N \sigma_\alpha^{(i)} \sigma_\alpha^{(j)}
    \Biggr\},
    \\
    J_\alpha^3 \nonumber
    &=\frac{1}{8}\sum_{i,j,k=1}^N \sigma_\alpha^{(i)}\sigma_\alpha^{(j)}\sigma_\alpha^{(k)}
    =\frac{1}{8}\Biggl\{
    \sum_{i=1}^N (\sigma_\alpha^{(i)})^3
    + 3 \sum_{i\neq j}^N (\sigma_\alpha^{(i)})^2 \sigma_\alpha^{(j)}
    + \sum_{i\neq j \neq k}^N \sigma_\alpha^{(i)}\sigma_\alpha^{(j)}\sigma_\alpha^{(k)}
    \Biggr\}
    \\ \nonumber
    &= \frac{1}{8}\Biggl\{
    \sum_{i=1}^N \sigma_\alpha^{(i)}
    + 3(N-1) \sum_{j=1}^N \sigma_\alpha^{(j)}
    + \sum_{i\neq j \neq k}^N \sigma_\alpha^{(i)}\sigma_\alpha^{(j)}\sigma_\alpha^{(k)}
    \Biggr\}
    \\
    &= \frac{1}{8}\Biggl\{
    (3N-2) \sum_{i=1}^N \sigma_\alpha^{(i)}
    + \sum_{i\neq j \neq k}^N \sigma_\alpha^{(i)}\sigma_\alpha^{(j)}\sigma_\alpha^{(k)}
    \Biggr\},
\end{align}
where $i\neq j \neq k$ means that $i \neq j$, $j \neq k$, and $k \neq i$ (we use similar expressions in the following). Expanding multiple summations similarly, we can obtain:
\begin{align}
    J_\alpha^4
    &=\frac{1}{16}\Biggl\{
    (3N-2)N \eins
    + 2(3N-4) \sum_{i\neq j}^N \sigma_\alpha^{(i)} \sigma_\alpha^{(j)}
    + \sum_{i\neq j \neq k \neq l}^N
    \sigma_\alpha^{(i)}\sigma_\alpha^{(j)}\sigma_\alpha^{(k)}\sigma_\alpha^{(l)}
    \Biggr\},
    \\
    J_\alpha^5
    &=\frac{1}{32}\Biggl\{
    [15N(N-2)+16]\sum_{i=1}^N \sigma_\alpha^{(i)}
    + 10(N-2) \sum_{i\neq j \neq k}^N
    \sigma_\alpha^{(i)} \sigma_\alpha^{(j)} \sigma_\alpha^{(k)}
    + \sum_{i\neq j \neq k \neq l \neq m}^N
    \sigma_\alpha^{(i)}\sigma_\alpha^{(j)}\sigma_\alpha^{(k)}
    \sigma_\alpha^{(l)}\sigma_\alpha^{(m)}
    \Biggr\},
    \\
    J_\alpha^6
    \nonumber
    &=\frac{1}{64}\Biggl\{
    N[15N(N-2)+16] \eins
    + [15 N (3N-10)+136] \sum_{i\neq j}^N \sigma_\alpha^{(i)} \sigma_\alpha^{(j)}
    + 5 (3N-8) \sum_{i\neq j \neq k \neq l}^N \sigma_\alpha^{(i)} \sigma_\alpha^{(j)}\sigma_\alpha^{(k)}
    \sigma_\alpha^{(l)}
    \\
    &\quad
    + \sum_{i\neq j \neq k \neq l \neq m \neq n}^N
    \sigma_\alpha^{(i)}\sigma_\alpha^{(j)}\sigma_\alpha^{(k)}
    \sigma_\alpha^{(l)}\sigma_\alpha^{(m)}\sigma_\alpha^{(n)}
    \Biggr\}.
\end{align}
It might be useful to note that this derivation is related to the so-called partition function in number theory and the so-called Young diagrams in combinatorics.
\end{itemize}
\section{Additional notes on Observation~\ref{ob:metrologicallymaximallyuseful}}\label{textortheorem_proof_useful}
\subsection{Variance's upper bound}
In Ref.~\cite{textor1978theorem}, the variance of a general Hamiltonian $H$ for any state $\vr$ has the upper bound
\begin{equation}
    \va{H}_{\vr} \leq \frac{1}{4}(h_{\rm max} - h_{\rm min})^2,
\end{equation}
where $h_{\rm max/min}$ are the max/min eigenvalues of $H$. For reader convenience, we here give the proof of the above inequality, following the description of Ref.~\cite{imai2023randomized}. We remark that the following proof is similar to the derivation of the Popoviciu inequality~\cite{popoviciu1935equations}, as well as the Bhatia–Davis inequality~\cite{bhatia2000better,sharma2008some}. 
    \begin{proof}
    We note that $h_{\text{min}} \eins \leq H \leq h_{\text{max}} \eins$. This leads to the following inequality:
    \begin{equation}
        0 \leq \ex{(h_{\text{max}} \eins - H) (H - h_{\text{min}} \eins)}_\vr
        \iff
        \ex{H^2}_\vr \leq \ex{H}_\vr (h_{\text{max}} + h_{\text{min}}) - h_{\text{max}} h_{\text{min}}.
    \end{equation}
    Substituting the right-hand-side in this inequality to the variance $\va{H}_\vr = \ex{H^2}_\vr - \ex{H}_\vr^2$ yields
    \begin{equation}
        \va{H}_\vr \leq
        (h_{\text{max}} -\ex{H}_\vr)
        (\ex{H}_\vr - h_{\text{min}}).
    \end{equation}
    Finally, by applying the inequality of arithmetic and geometric means $xy \leq [(x+y)/2]^2$, we can complete the proof.
    \end{proof} 

\subsection{Singlet states}
The singlet states $\ket{S_N}$ are invariant under any local unitary $U^{\otimes N}$, up to a global phase $\varphi$: 
\begin{equation}
    U^{\otimes N}\ket{S_N} = e^{i \varphi}\ket{S_N},
\end{equation}
or equivalently, $U^{\otimes N}\ket{S_N}\! \bra{S_N} (U^{\dagger})^{\otimes N} = \ket{S_N}\! \bra{S_N}$. For any even $N$, singlet states form a linear subspace with the dimension $d(N) = N!/[(N/2)! (N/2 + 1)!]$. 

For $N=2$, there exists only one singlet state, i.e., $d(2)=1$: $\ket{S_2} = (\ket{01}-\ket{10})/\sqrt{2}$. For $N=4$, there exist two singlet states, i.e., $d(2)=2$.
They form a two-dimensional subspace spanned by
$\ket{S_4^{(1)}} = \ket{S_2} \otimes \ket{S_2}$ and 
\begin{equation}
    \ket{S_4^{(2)}} =
    \frac{1}{2\sqrt{3}} \left( 2\ket{0011} - \ket{0101} - \ket{0110} - \ket{1001} - \ket{1010} + 2\ket{1100} \right).
\end{equation}
In Refs.~\cite{cabello2003solving,cabello2003supersinglets}, the family of $N$-qubit singlet states can be written as
\begin{equation}
\ket{\tilde{S}_N} = \frac{1}{\left(\frac{N}{2}\right)! \sqrt{\frac{N}{2} + 1}}
\sum_{\pi}
z! \, \left( \frac{N}{2} - z \right)! \, (-1)^{\frac{N}{2} - z}
\, \pi \left[\ket{01}^{\otimes \frac{N}{2}}\right],
\end{equation}
where the sum runs over all permutation $\pi$ and $z$ is the number of zeros in the first $N/2$ position (e.g., $z=2$ in $\ket{010110}$). Note that $\ket{\tilde{S}_2} = \ket{S_2}$, $\ket{\tilde{S}_4} = \ket{S_4^{(2)}}$, and
\begin{align} \nonumber
    \ket{\tilde{S}_6}
    &= \frac{1}{6} \big(3\ket{000111} - \ket{001011} - \ket{001101} - \ket{001110} - \ket{010011} - \ket{010101} - \ket{010110} + \ket{011001} \\ \nonumber
    &\quad
    + \ket{011010} + \ket{011100} - \ket{100011} - \ket{100101} - \ket{100110} + \ket{101001} + \ket{101010} + \ket{101100}
    \\
    &\quad
    + \ket{110001} + \ket{110010} + \ket{110100} - 3\ket{111000} \big).
\end{align}

\section{Additional notes on Observation~\ref{ob:sfractionValue}}\label{ap:additional_notes_sfraction}
Here we provide the explicit form of $s_2$ and $s_3$ in Observation~\ref{ob:sfractionValue} in the main text:
\begin{subequations}
        \begin{align}
            s_2 &=
            \frac{(2 N-3) [N^4 \delta _{{\rm even},N}+(N^2-1)^2 \delta _{{\rm odd},N}]}{8 (N-1)^3 N},
            \\
            s_3 &=
            \frac{\frac{27}{2} N^6}{c_1 + c_2 + 3 N [N (3 N^3 - 6 N^2 - 40 N - 60) - 340]},
            \\
            c_1&= \frac{380 (164-71 N)}{3 (N-5) N+20}
            + \frac{12800 (N-1)}{[3 (N-5) N+20]^2}
            -3084,
            \\ 
            c_2
            &=3 \sqrt{\frac{N^2 [N(N c_3 -1440) + 480]^3}
            {(N-2) (N-1) [3 (N-5) N+20]^4}},
            \\
            c_3&= 
            3 N \{N [3 (N-9) N+128]-360\}+1720.
        \end{align}
\end{subequations}

Also, in Fig.~\ref{fig:7}, we plot the quantity $s_k \equiv s(J_\alpha^k)$ up to $N \leq 9$ and $k \leq 9$, where we numerically compute the separability bounds. Although Observation~\ref{ob:sfractionValue} in the main text disproves the presence of a simple hierarchical order $s_{k} > s_{k+1}$ for a large $N$, we would like to conjecture that another hierarchical order $s_{k} > s_{k+2}$ exists for a large $N$.

\begin{figure}[t]
    \centering
    \includegraphics[width=0.6\columnwidth]{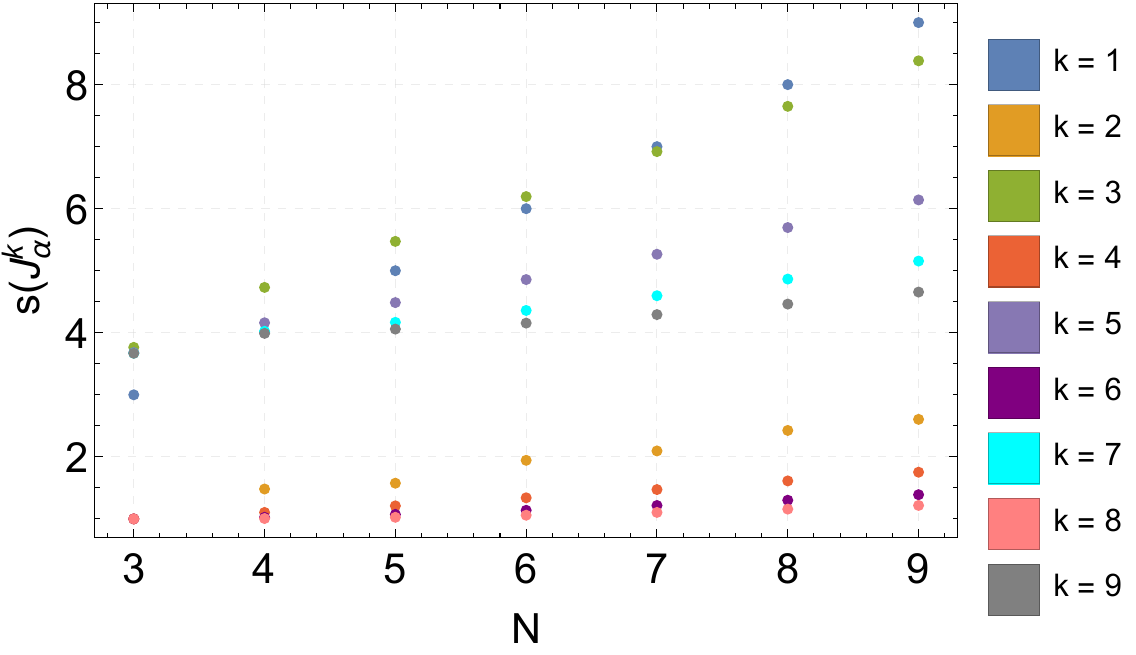}
    \caption{Plot of the quantity $s(J_\alpha^k)$ in Eq.~(\ref{eq:sfraction}) in the main text for $N\in [3,9]$ and $k\in [1,9]$.}
    \label{fig:7}
\end{figure}

\section{Metrologically useful Hamiltonian}\label{ap:complicatedHam}
Here we consider the Hamiltonian $H_{\alpha, \beta} = \mu J_\alpha + \nu J_\beta^2$, for parameters $\mu, \nu$ and different directions $\alpha, \beta$. In general, there is no analytical solution for the eigenproblem of $H_{\alpha, \beta}$, so we should numerically compute $\mathcal{C}_{\rm ent}(H_{\alpha, \beta})$. On the other hand, we have collected numerical evidence that $\mathcal{C}_{\rm sep}(H_{\alpha, \beta})$ can be attained by a symmetric state $\ket{\phi_{\rm sep}^\ast}$. Based on that, in the following, we consider a symmetric state to derive $\mathcal{C}_{\rm sep} (H_{\alpha, \beta})$ via $\max \va{H_{\alpha, \beta}}_{\ket{\phi_{\rm sep}^\ast}}$:
\begin{align} \nonumber
    \va{H_{\alpha, \beta}}_{\ket{\phi_{\rm sep}^\ast}}
    &= \ex{H_{\alpha, \beta}^2}_{\ket{\phi_{\rm sep}^\ast}}
    - \ex{H_{\alpha, \beta}}_{\ket{\phi_{\rm sep}^\ast}}^2
    \\
    &=\frac{N}{8} \left\{2 \left(1-\alpha^2\right) \mu^2-4 \alpha  \beta^2 \mu  \nu  (N-1)
    +(1-\beta) (1+\beta) \nu^2 (N-1) \left[\beta^2 (2 N-3)+1\right]\right\},
    \label{eq:symmetricHab}
\end{align}
where we denote that $\alpha = \braket{\phi|\sigma_\alpha|\phi}$ and $\beta = \braket{\phi|\sigma_\beta|\phi}$ and used
\begin{align} \nonumber
    H_{\alpha, \beta}^2
    &= \mu^2 J_\alpha^2 + \nu^2 J_\beta^4
    + \mu \nu (J_\beta^2 J_\alpha + J_\alpha J_\beta^2)
    \\ \nonumber
    &= 
    \frac{\mu^2}{4}\Biggl\{
    N\eins + \sum_{i\neq j}^N \sigma_\alpha^{(i)} \sigma_\alpha^{(j)}
    \Biggr\}
    + \frac{\nu^2}{16}\Biggl\{
    (3N-2)N \eins
    + 2(3N-4) \sum_{i\neq j}^N \sigma_\beta^{(i)} \sigma_\beta^{(j)}
    + \sum_{i\neq j \neq k \neq l}^N
    \sigma_\beta^{(i)}\sigma_\beta^{(j)}\sigma_\beta^{(k)}\sigma_\beta^{(l)}
    \Biggr\}
    \\
    &\quad    
    + \frac{\mu \nu }{8}
    \Biggl\{
    2N\sum_{i=1}^N \sigma_\alpha^{(i)}
    + \sum_{i\neq j \neq k}^N
    \left[
    \sigma_\beta^{(i)} \sigma_\beta^{(j)} \sigma_\alpha^{(k)}
    + \sigma_\alpha^{(i)} \sigma_\beta^{(j)} \sigma_\beta^{(k)}
    \right]    
    \Biggr\}.
\end{align}
Here $\alpha \in [-1,1]$ and $\beta \in [-1,1]$ obeys the purity condition $\alpha^2 + \beta^2 \leq 1$. Performing the maximization over $\alpha, \beta$ with these conditions, we can find the separability bound $\mathcal{C}_{\rm sep}(H_{\alpha, \beta})$. Note that this optimization includes four parameters, so the explicit forms of the separability bound may not be available. In Fig.~\ref{fig:3}, we illustrate the comparison of the separability bound $\mathcal{C}_{\rm sep}(H_{\alpha, \beta})$ in Eq.~(\ref{eq:SQL}) in the main text and the quantity $s(H_{\alpha, \beta})$ in Eq.~(\ref{eq:sfraction}) in the main text between different values of $\mu$ (setting $\nu = 1-\mu$).

\begin{figure}[t]
    \centering
    \includegraphics[width=1.0\columnwidth]{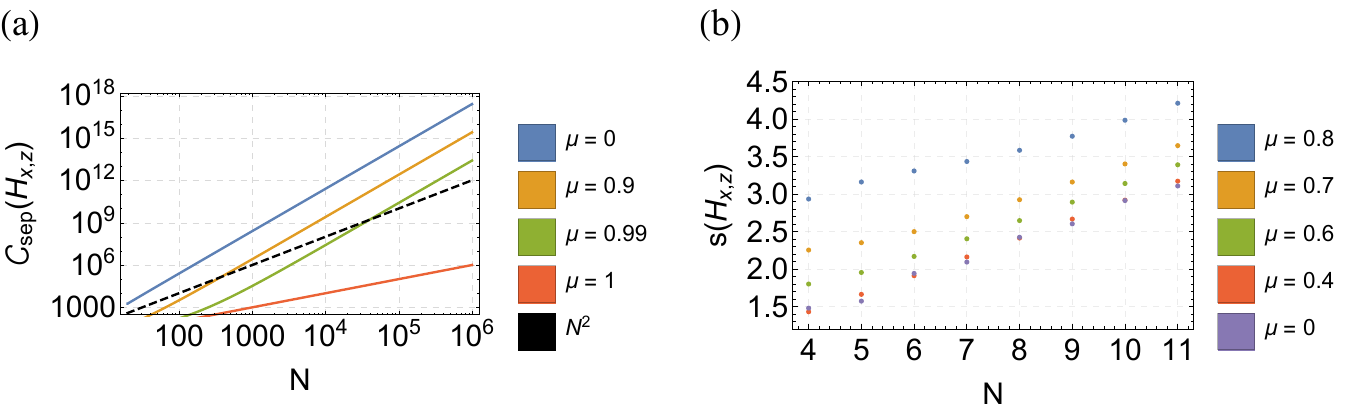}
    \caption{Comparison of the separability bound $\mathcal{C}_{\rm sep}(H_{\alpha, \beta})$ in Eq.~(\ref{eq:SQL}) and the quantity $s(H_{\alpha, \beta})$ in Eq.~(\ref{eq:sfraction}) in the main text for Hamiltonian $H_{\alpha, \beta} = \mu J_\alpha + \nu J_\beta^2$.
    Here we set $\alpha = x, \beta = z$, $\nu = 1-\mu$.
    (a) $\mu  = 0, 0.9, 0.99, 1$, and $20 \leq N \leq 10^6$.
    (b) $\mu  = 0, 0.4, 0.6, 0.7, 0.8$, and $4 \leq N \leq 11$.}
    \label{fig:3}
\end{figure}

\section{Spin squeezing}\label{ap:spinsqueezing}
Here we first explain the basic notions of spin squeezing. Then we show detailed calculations and discuss the relation with metrologically useful entanglement.
\begin{itemize}
    \item 
    The concept of spin-squeezed states was initially introduced to achieve higher accuracy compared to classical interferometers~\cite{wineland1992spin} (also see \cite{ma2011quantum}). Several previous works~\cite{wang2003spin,korbicz2005spin,toth2007optimal,toth2009spin} have considered an $N$-qubit state $\vr$ as spin-squeezed if its entanglement can be detected from $\ex{J_\alpha}_\vr$ and $\ex{J_\alpha^2}_\vr$. Refs.~\cite{toth2007optimal,toth2009spin} proposed the optimal spin-squeezing inequalities with optimal measurement directions: any $N$-qubit fully separable state obeys
\begin{subequations}
    \begin{align} \label{eq:ss1}
    \tr(\Gamma) &\geq \frac{N}{2},
    \\
    \chi_{\rm min}(\mathfrak{X}) &\geq \tr(C) - \frac{N}{2},
    \label{eq:ss2}
    \\
    \chi_{\rm max}(\mathfrak{X}) &\leq  (N-1) \tr(\Gamma) - \frac{N(N-2)}{4},
    \label{eq:ss3}
\end{align}
\end{subequations}
where $C$ denotes the matrix with $C_{kl} = \ex{J_k J_l + J_k J_l}/2$, $\Gamma$ denotes the matrix with $\Gamma_{kl} = C_{kl} - \ex{J_k} \ex{J_l}$ for $k,l=x,y,z$, $\mathfrak{X} = (N-1)\Gamma + C$, and $\chi_{\rm max/min}$ are respectively the largest/smallest eigenvalues of matrix $\mathfrak{X}$. Violation of these inequalities implies that the state is multipartite spin-squeezed entangled. The W state, Dicke state, and singlet state $\ket{S_N}$ in the main text are detected by these bounds, while the GHZ state cannot be detected. We remark that the previous bound $\mathcal{C}_{\rm sep}(J_\alpha) = N$, initially proposed Ref.~\cite{pezze2009entanglement}, can detect the entangled states that cannot be detected by these spin-squeezing inequalities in Eqs.~(\ref{eq:ss1}),~(\ref{eq:ss2}),~and~(\ref{eq:ss3}), such as the GHZ state.

\item 
    In the following, we show the calculation of Eqs.~(\ref{eq:ss1}),~(\ref{eq:ss2}),~and~(\ref{eq:ss3}) for the mixed state $\vr_{\eta}$ in Eq.~(\ref{eq:testmix}):
\begin{equation}
    \vr_{\eta}
    = \eta \ket{\Phi}\!\bra{\Phi} + \frac{1-\eta}{2^N}\eins,
    \quad
    \ket{\Phi}
    = \sqrt{\lambda_1} \ket{0}^{\otimes N}
    +\sqrt{\lambda_2} \ket{1}^{\otimes N}
    +\sqrt{\lambda_3} \ket{S_N},
\end{equation}
where, for the sake of simplicity, we set $\alpha = z$. For the pure case $\eta = 1$, we summarize:
\begin{subequations}
\begin{align}
    &\ex{J_x}_{\Phi} = 0,
    \quad
    \ex{J_y}_{\Phi} = 0,
    \quad
    \ex{J_z}_{\Phi} = \tfrac{N}{2}(\lambda_1 - \lambda_2),
    \\
    &\ex{J_x^2}_{\Phi} = \tfrac{N}{4}(\lambda_1 + \lambda_2),
    \quad
    \ex{J_y^2}_{\Phi} = \tfrac{N}{4}(\lambda_1 + \lambda_2),
    \quad
    \ex{J_z^2}_{\Phi} = \left(\tfrac{N}{2}\right)^2(\lambda_1 + \lambda_2),
    \\
    &\ex{J_x J_y + J_y J_x} = 0,    
    \quad
    \ex{J_y J_z + J_z J_y} = 0,
    \quad
    \ex{J_z J_x + J_x J_z} = 0,
    \\
    &C = (\lambda_1 + \lambda_2) \, {\rm diag}\left(\tfrac{N}{4}, \tfrac{N}{4}, \tfrac{N^2}{4}\right),
    \quad
    \Gamma = {\rm diag} \left\{ \tfrac{N}{4}(\lambda_1 + \lambda_2), \tfrac{N}{4}(\lambda_1 + \lambda_2), \tfrac{N^2}{4} [\lambda_1 + \lambda_2 - (\lambda_1 - \lambda_2)^2]\right\}.
\end{align}
\end{subequations}
This can directly lead to the general case ($0 \leq \eta \leq 1$):
\begin{subequations}
\begin{align}
    C &= {\rm diag}\left[
    f(N, \eta, \lambda_1, \lambda_2),
    f(N, \eta, \lambda_1, \lambda_2),
    \tfrac{N^2}{4} \eta (\lambda_1 + \lambda_2) + \tfrac{N}{4}(1-\eta)
    \right],
    \\
    \Gamma &= {\rm diag}\left\{
    f(N, \eta, \lambda_1, \lambda_2),
    f(N, \eta, \lambda_1, \lambda_2),
    \tfrac{N^2}{4} [\eta (\lambda_1 + \lambda_2) - \eta^2(\lambda_1 - \lambda_2)^2] + \tfrac{N}{4}(1-\eta)
    \right\},
\end{align}
\end{subequations}
where $f(N, \eta, \lambda_1, \lambda_2) = \tfrac{N}{4} \eta(\lambda_1 + \lambda_2) + \tfrac{N}{4}(1-\eta)$.

    \item 
    For instance, the optimal entangled state $\ket{\Phi(\lambda)} = \sqrt{\lambda}\ket{0}^{\otimes N}+\sqrt{{1}/{2} - \lambda}\ket{1}^{\otimes N} + \sqrt{{1}/{2}} \ket{S_N}$ for $J_z^k$ with even $k$, mentioned in the main text, violates the inequality in Eq.~(\ref{eq:ss3}) for any $\lambda \in [0,1/2]$ and any $N \geq 4$. That is, state $\ket{\Phi(\lambda)}$ is spin-squeezed. Also, the mixed state $\vr_{\eta,\lambda} = \eta \ket{\Phi(\lambda)}\!\bra{\Phi(\lambda)} + \frac{1-\eta}{2^N}\eins$, violates the inequality in Eq.~(\ref{eq:ss3}) when $\eta > 2 (N-1)/(3N-4)$.
    
    \item 
    In Fig~\ref{fig:4}, we consider the case of $N=10$ and illustrate the classification of metrologically useful entanglement in the space with $F_Q(\vr_{\eta}, J_\alpha)$ and $F_Q(\vr_{\eta}, J_\alpha^2)$, using the previous bound $\mathcal{C}_{\rm sep}(J_\alpha) = N$~\cite{pezze2009entanglement}, our bound $\mathcal{C}_{\rm sep}(J_\alpha^2)$ in Eq.~(\ref{eq:sepbound2}), and the optimal spin-squeezing inequalities~\cite{toth2007optimal,toth2009spin}. In Fig.~\ref{fig:5}, we illustrate the comparison between our bounds presented in Observation~\ref{ob:sepbounds} in the main text, the previous bound $\mathcal{C}_{\rm sep}(J_\alpha) = N$, and the optimal spin-squeezing inequalities in Eqs.~(\ref{eq:ss1}),~(\ref{eq:ss2}),~and~(\ref{eq:ss3}), for the mixed state $\vr_{\eta}$ in Eq.~(\ref{eq:testmix}) in the main text (also see below). Our bounds can also discover the metrologically useful entangled states that cannot be detected by not only the spin-squeezing inequalities but also the previous bound $\mathcal{C}_{\rm sep}(J_\alpha) = N$.
\end{itemize}
 
\begin{figure}[t]
    \centering
    \includegraphics[width=1.0\columnwidth]{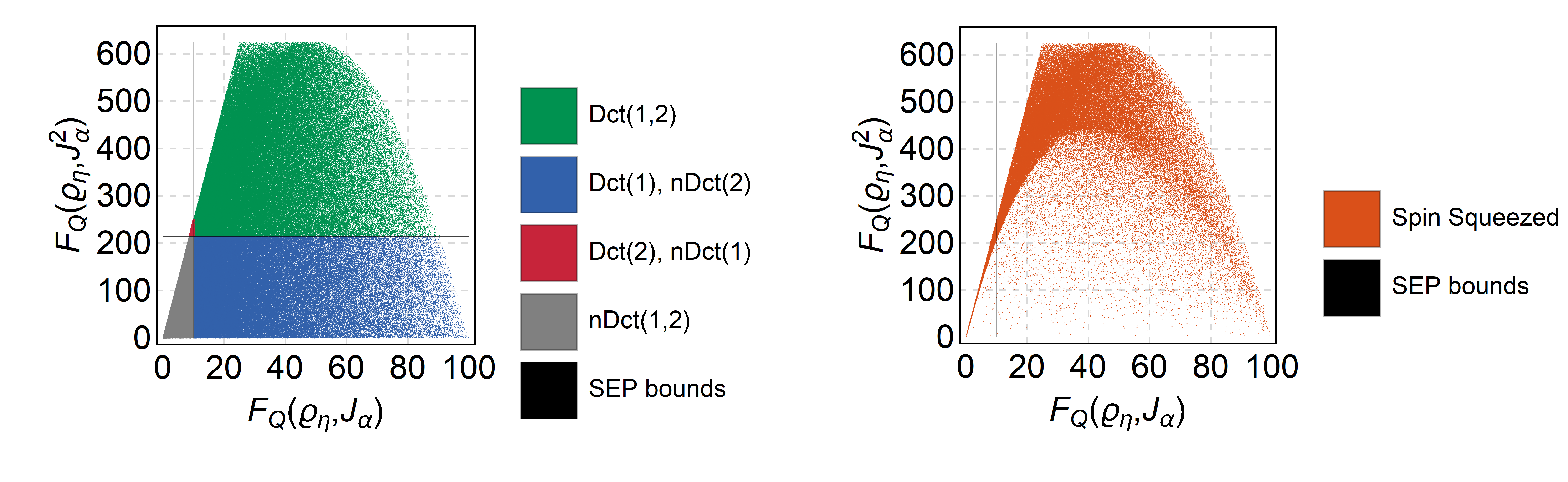}
    \caption{Classification of metrologically useful entanglement for the mixed state $\vr_{\eta}$ with $N=10$, given in Eq.~(\ref{eq:testmix}) in the main text, in the space coordinated by $F_Q(\vr_{\eta}, J_\alpha)$ and $F_Q(\vr_{\eta}, J_\alpha^2)$. These figures are created from a random sample of $10^6$ points $\lambda_1, \lambda_2, \eta \in [0,1]$ with~$0 \leq \lambda_1 + \lambda_2 \leq 1$ in Eq.~(\ref{eq:maxusefulstate}) in the main text.
    Left: Dct($k$)/nDct($k$) for $k=1,2$ means the areas that can/cannot be detected by SEP bounds given in~$\mathcal{C}_{\rm sep}(J_\alpha) = N$~\cite{pezze2009entanglement} and $\mathcal{C}_{\rm sep}(J_\alpha^2)$ presented in Observation~\ref{ob:sepbounds} in the main text.
    Right: Spin Squeezed means the areas that can be detected by all the optimal spin-squeezing inequalities in Eqs.~(\ref{eq:ss1}),~(\ref{eq:ss2}),~and~(\ref{eq:ss3}), presented in Refs.~\cite{toth2007optimal,toth2009spin}.
    }
    \label{fig:4}
\end{figure}

\begin{figure}[t]
    \centering
    \includegraphics[width=1.0\columnwidth]{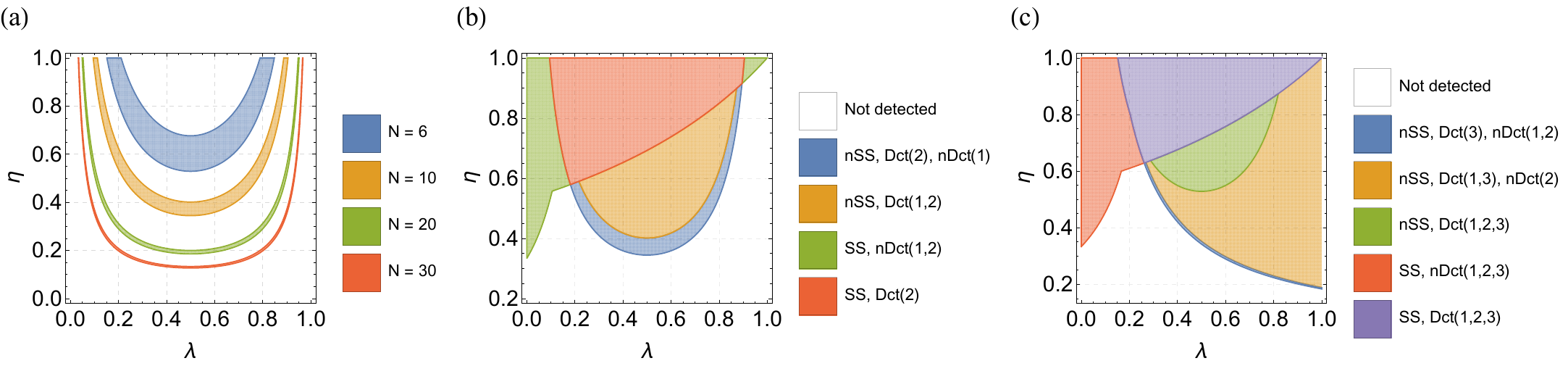}
    \caption{
    Entanglement criteria for the mixed state $\vr_{\eta}$, given in Eq.~(\ref{eq:testmix}) in the main text, in the $\lambda-\eta$ plane. (a) $N=6,10,20,30$, and $\lambda_1 = \lambda$ and $\lambda_2 = 0$. Each colored area illustrates the entangled states that cannot be detected by the previous bound $\mathcal{C}_{\rm sep}(J_\alpha) = N$~\cite{pezze2009entanglement}, but can be detected by Eq.~(\ref{eq:sepbound2}) presented in Observation~\ref{ob:sepbounds} in the main text, thus marking the improvement of this manuscript compared with the previous result. (b) $N=10$, $\lambda_1 = \lambda$, and $\lambda_2 = 0$. (c) $N=6$ and $\lambda_1 = \lambda_2 = \lambda/2$. Here, Dct($k$)/nDct($k$) means the areas that can/cannot be detected by the bounds with the QFI for $k=1,2,3$, respectively given in $\mathcal{C}_{\rm sep}(J_\alpha) = N$~\cite{pezze2009entanglement} and Eqs.~(\ref{eq:sepbound2})~and~(\ref{eq:sepbound3}) presented in Observation~\ref{ob:sepbounds} in the main text. Also, SS/nSS means the areas that can/cannot be detected by the optimal spin-squeezing inequalities in Eqs.~(\ref{eq:ss1}),~(\ref{eq:ss2}),~and~(\ref{eq:ss3}), presented in Refs.~\cite{toth2007optimal,toth2009spin}.
    }
    \label{fig:5}
\end{figure}

\section{Average QFI of random pure symmetric states}\label{ap:averageQFI}
Here we first explain the basics of Haar random integration and then describe the average QFI of random pure states of the symmetric subspace. Finally, we give the proof of Observation~\ref{ob:tfractionaverageValue} in the main text.
\begin{itemize}
    \item 
    First, we give basic notions of Haar unitaries, following the descriptions of Ref.~\cite{cieslinski2023analysing} (for further details, also see Refs.~\cite{harrow2013church,oszmaniec2016random,roberts2017chaos,mele2024introduction}). Let $f(U)$ be a function on a unitary $U$. Consider an integral of $f(U)$ over the unitary group concerning the Haar measure. Importantly, the Haar unitary integral is invariant under left and right shifts via multiplication by a unitary $V$, i.e., 
     \begin{equation}\label{eq:invarianceleftright}
         \int dU \, f(U) = \int dU \, f(VU) = \int dU \, f(UV).
     \end{equation}
     For a $t$-particle and $D$-dimensional operator $X$, let us consider
     \begin{equation}
        \Lambda_t (X) = \int dU \, U^{\otimes t} X (U^\dagger)^{\otimes t}.
     \end{equation}
     Due to the invariance property in Eq.~(\ref{eq:invarianceleftright}), one can show that $\Lambda_t (X)$ commutes with all unitaries $V^{\otimes t}$. Using this property and the Schur-Weyl duality, the operator $\Lambda_t (X)$ can be written in a linear combination of permutation operators $W_\pi$ with a permutation $\pi$ on the symmetric group ${\rm Sym}(t)$ of degree $t$, i.e.,
     \begin{equation}
         \Lambda_t (X) = \sum_{\pi \in {\rm Sym}(t)} x_{\pi} W_\pi,
     \end{equation}
     where each of $x_{\pi}$ can be found with the help of the so-called Weingarten calculus. In particular, in the case of $X = \ket{0}\! \bra{0}^{\otimes t}$ with a $D$-dimensional pure state $\ket{0}$, the operator $\Lambda_t (X)$ can be seen as a uniform random ensemble overall pure state. Letting $\ket{\psi}= U\ket{0}$ be a pure quantum state, one can write that
     \begin{equation}\label{eq:haarstateintegrals}
         \Lambda_t (\ket{0}\! \bra{0}^{\otimes t}) 
         = \int d\psi \, \ket{\psi}\! \bra{\psi}^{\otimes t}
         = \frac{\mathcal{P}_{\rm sym}^{(t, D)}}{d_{\rm sym}^{(t, D)}},
     \end{equation}
     where $\mathcal{P}_{\rm sym}^{(t, D)} = ({1}/{t!}) \sum_{\pi \in {\rm Sym}(t)} W_\pi$ is the projector onto the permutation symmetric subspace and $d_{\rm sym}^{(t, D)} = \binom{D+t-1}{t}$ is its dimension. Examples for $t=1,2$ are given by
     \begin{equation}\label{eq:formlasintrandpmstate}
         \int d\psi \, \ket{\psi}\! \bra{\psi}
         = \frac{\eins_D}{D},
         \qquad
         \int d\psi \, \ket{\psi}\! \bra{\psi}^{\otimes 2}
         = \frac{1}{D(D+1)}(\eins_D \otimes \eins_D + \swap),
     \end{equation}
     where $\swap$ denotes the SWAP operator acting on the $D \otimes D$-dimensional space such that $\swap \ket{x} \otimes \ket{y} = \ket{y} \otimes \ket{x}$. Here the Haar integrals in Eq.~(\ref{eq:formlasintrandpmstate}) are taken over the whole Hilbert space.
     
    \item 
    Let $\ket{\psi_{\rm sym}}$ be a $N$-particle and $d$-dimensional pure symmetric state. In Ref.~\cite{oszmaniec2016random}, the Haar integral over the symmetric subspace was discussed. More specifically, the following formulas were introduced:
    \begin{equation}\label{eq:formlasintrandpmstatesym}
         \int d\psi_{\rm sym} \, \ket{\psi_{\rm sym}}\! \bra{\psi_{\rm sym}}
         = \frac{\mathcal{P}_{\rm sym}^{(N, d)} }{d_{\rm sym}^{(N, d)}},
         \quad
         \int d\psi \, \ket{\psi_{\rm sym}}\! \bra{\psi_{\rm sym}}^{\otimes 2}
         = \frac{\mathcal{P}_{\rm sym}^{(N, d)} \otimes \mathcal{P}_{\rm sym}^{(N, d)} + \swap}{d_{\rm sym}^{(N, d)}\left(d_{\rm sym}^{(N, d)}+1\right)},
     \end{equation}
      where $\mathcal{P}_{\rm sym}^{(N, d)}$ is the projector onto the permutation symmetric subspace with the dimension $d_{\rm sym}^{(N, d)} = \binom{d+N-1}{N}$ and $\swap$ is the SWAP operator acting on the $d_{\rm sym}^{(N, d)} \otimes d_{\rm sym}^{(N, d)}$-dimensional symmetric subspace. Ref.~\cite{oszmaniec2016random} has used the formulas in Eq.~(\ref{eq:formlasintrandpmstatesym}) to compute the average QFI of random pure symmetric states for the linear Hamiltonian $H_{\rm L} = \sum_{i=1}^N H_i$:
      \begin{equation}\label{eq:averageQFI}
          \Bar{\mathcal{F}}_Q(H_{\rm L})
          \equiv \int d\psi_{\rm sym} \, F_Q(\ket{\psi_{\rm sym}}, H_{\rm L}).
      \end{equation}
      In particular, in the case of $N$-qudit ($d=2$) and $H_{\rm L} = J_\alpha$, the average QFI in Eq.~(\ref{eq:averageQFI}) is expressed as
      \begin{equation}
          \Bar{\mathcal{F}}_Q(J_\alpha)
          = \frac{N(N+1)}{3}.
      \end{equation}
      For more general expressions (linear Hamiltonians in higher dimensions), see Lemma 8 in Ref.~\cite{oszmaniec2016random}.
\end{itemize}

In the following, we give the proof of Observation~\ref{ob:tfractionaverageValue} in the main texts:
\begin{observation}
Consider an $N$-qubit Hamiltonian $H_{\rm NL} = J_\alpha^k$. For large $N$, we obtain that $\Bar{\mathcal{F}}_Q(J_\alpha^k) \propto ({1}/{k}) \mathcal{C}_{\rm ent}(J_\alpha^k)$ (with analytical expression valid for any $k$ and $N$ reported in the proof), leading to the scaling behavior $\mathcal{O}(N^{2k}/k)$. Also, defining the quantity $t_k \equiv \Bar{\mathcal{F}}_Q(J_\alpha^k)/\mathcal{C}_{\rm sep}(J_\alpha^k)$, we have that $t_2 < t_3 < t_1$ for $N \geq 4$. Here, $t_1 = (N+1)/3$ and the explicit forms of $t_2, t_3$ are given by
\begin{subequations}
        \begin{align}
            t_2 &=
            \frac{2(N+1)(N+3)(2N-3)}{45(N-1)^2},
            \\
            t_3 &= \frac{9}{14}\frac{N (1 + N) [16 +  3 N (-8 + N^3+ 4N^2)]}
            {3 N [-340 + N (-60 + N (-40 + 3N^2 -6N )] + c_1 + c_2},
            \\
            c_1&= \frac{380 (164-71 N)}{3 (N-5) N+20}
            + \frac{12800 (N-1)}{[3 (N-5) N+20]^2}
            -3084,
            \\ 
            c_2
            &=3 \sqrt{\frac{N^2 [N(N c_3 -1440) + 480]^3}
            {(N-2) (N-1) [3 (N-5) N+20]^4}},
            \\
            c_3&= 
            3 N \{N [3 (N-9) N+128]-360\}+1720.
        \end{align}
\end{subequations}
\end{observation}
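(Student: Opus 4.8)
The plan is to carry out the Haar average defining $\Bar{\mathcal{F}}_Q(J_\alpha^k)$ [Eq.~\eqref{eq:averageQFIbosonic} with $H=J_\alpha^k$] directly, using the second-moment twirling identity of Eq.~\eqref{eq:formlasintrandpmstatesym}, and then to reduce everything to power sums over the spin-$N/2$ spectrum. Writing $F_Q(\ket{\psi_{\rm sym}},J_\alpha^k)=4\va{J_\alpha^k}_{\ket{\psi_{\rm sym}}}=4(\langle J_\alpha^{2k}\rangle-\langle J_\alpha^k\rangle^2)$, using linearity of the integral, and rewriting $\langle J_\alpha^k\rangle^2=\tr[(J_\alpha^k\otimes J_\alpha^k)\ket{\psi_{\rm sym}}\!\bra{\psi_{\rm sym}}^{\otimes 2}]$, the first moment produces $\tr(J_\alpha^{2k}\mathcal{P}_{\rm sym}^{(N,2)})/(N+1)$ and the second produces $[\tr(J_\alpha^{k}\mathcal{P}_{\rm sym}^{(N,2)})^2+\tr((J_\alpha^k\otimes J_\alpha^k)\swap)]/[(N+1)(N+2)]$, where I use $d_{\rm sym}^{(N,2)}=N+1$ and $\tr[(A\otimes B)\swap]=\tr(AB)$. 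Since $J_\alpha$ restricted to the symmetric subspace is the spin-$N/2$ operator with eigenvalues $m\in\{-N/2,\dots,N/2\}$, each trace is an elementary power sum $S_p\equiv\sum_{m=-N/2}^{N/2}m^p$, so the whole expression collapses to
\begin{equation}
\Bar{\mathcal{F}}_Q(J_\alpha^k)=\frac{4\big[(N+1)S_{2k}-S_k^2\big]}{(N+1)(N+2)},
\end{equation}
which is the claimed closed form valid for all $k,N$; note $S_k=0$ for odd $k$.

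Next I would evaluate the $S_p$ with Faulhaber's formula: each $S_p$ is a single polynomial in $N$ of degree $p+1$ (independent of the parity of $N$), with leading term $N^{p+1}/[(p+1)2^p]$, readable off the top Bernoulli coefficient or from $\int_{-N/2}^{N/2}x^p\,dx$. Inserting $S_1,\dots,S_6$ recovers $\Bar{\mathcal{F}}_Q(J_\alpha)=N(N+1)/3$ and gives $\Bar{\mathcal{F}}_Q(J_\alpha^2)=N(N-1)(N+1)(N+3)/45$ together with the analogous rational expression for $k=3$; dividing by $\mathcal{C}_{\rm sep}(J_\alpha)=N$ and by the exact bounds in Eqs.~\eqref{eq:sepbound2} and \eqref{eq:sepbound3ap} yields $t_1=(N+1)/3$, $t_2=\tfrac{2(N+1)(N+3)(2N-3)}{45(N-1)^2}$, and $t_3$ as stated. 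For the scaling claims I keep leading orders only: for odd $k$, $\Bar{\mathcal{F}}_Q(J_\alpha^k)\sim 4S_{2k}/N\sim N^{2k}/[(2k+1)4^{k-1}]=\mathcal{C}_{\rm ent}(J_\alpha^k)/(2k+1)$; for even $k$, using $S_k^2\sim N^{2k+2}/[(k+1)^2 4^k]$ one finds $\Bar{\mathcal{F}}_Q(J_\alpha^k)\sim\frac{4k^2}{(2k+1)(k+1)^2}\,\mathcal{C}_{\rm ent}(J_\alpha^k)$, with $\mathcal{C}_{\rm ent}$ from Eq.~\eqref{eq:allbound} (its refinement for odd $N$ is subleading). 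Both prefactors are $\Theta(1/k)$, giving $\Bar{\mathcal{F}}_Q(J_\alpha^k)\propto(1/k)\,\mathcal{C}_{\rm ent}(J_\alpha^k)$ and the $\mathcal{O}(N^{2k}/k)$ behavior.

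Finally, the ordering $t_2<t_3<t_1$ for $N\ge4$ is a comparison of explicit functions of $N$. The inequality $t_2<t_1$ cancels the common factor $(N+1)/3$ and reduces to $2(N+3)(2N-3)<15(N-1)^2$, i.e.\ $11N^2-36N+33>0$, which has negative discriminant and so holds for all $N$. For $t_2<t_3$ and $t_3<t_1$ I would clear denominators the same way; the one complication is that $\mathcal{C}_{\rm sep}(J_\alpha^3)$, hence $t_3$, carries the square root $c_2$ coming from the quadratic maximization over $\alpha^2$ in Appendix~\ref{ap:additional_notes_sepbounds}. I would isolate that radical, check the sign of the remaining polynomial part for $N\ge4$, and square, turning each inequality into a polynomial inequality in $N$ that can be certified by factoring or by the substitution $N\mapsto 4+x$ with $x\ge0$ making all coefficients non-negative. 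As a companion fact used in the surrounding discussion, a concentration inequality for the QFI of a random symmetric state, built on the same second-moment formula, shows that the typical value (not only the mean) exceeds $\mathcal{C}_{\rm sep}(J_\alpha^k)$ for large $N$. I expect the $k=3$ comparisons to be the main obstacle: because of the radical $c_2$ the ordering is not a bare polynomial comparison and requires careful sign bookkeeping before it reduces to one.
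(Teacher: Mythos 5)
Your proposal is correct and follows essentially the same route as the paper: the second-moment Haar formula on the symmetric subspace, the SWAP trick, reduction of the traces to power sums of the spin-$N/2$ eigenvalues (your $S_p$ is the paper's $\tau_{N,p}$ and your closed form is algebraically identical to the paper's), Faulhaber's formula for the exact $t_k$, and the integral approximation of the power sums for the $1/k$ scaling. Your asymptotic prefactors ($1/(2k+1)$ for odd $k$, $4k^2/[(2k+1)(k+1)^2]$ for even $k$) are in fact slightly more careful than the paper's, and your explicit reduction of $t_2<t_1$ to $11N^2-36N+33>0$ goes a bit beyond what the paper writes out, while the $t_3$ comparisons are left at the same level of sketch in both.
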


\begin{proof}
We begin by writing
    \begin{subequations}
    \begin{align}
          \Bar{\mathcal{F}}_Q(J_\alpha^k)
          &= \int d\psi_{\rm sym} \, F_Q(\ket{\psi_{\rm sym}}, J_\alpha^k)
          \\ \label{eq:variancefisher}
          &= 4\int d\psi_{\rm sym} \,
          \tr \left[\ket{\psi_{\rm sym}}\! \bra{\psi_{\rm sym}} J_\alpha^{2k} \right]
          - 4\int d\psi_{\rm sym} \,
          \tr \left[\ket{\psi_{\rm sym}}\! \bra{\psi_{\rm sym}} J_\alpha^{k} \right]^2
          \\
          &= 
          4\tr \left[\int d\psi_{\rm sym} \, \ket{\psi_{\rm sym}}\! \bra{\psi_{\rm sym}} J_\alpha^{2k} \right]
          - 4\tr \left[ \int d\psi_{\rm sym} \, \ket{\psi_{\rm sym}}\! \bra{\psi_{\rm sym}}^{\otimes 2} J_\alpha^{k} \otimes J_\alpha^{k} \right]
          \\ \label{eq:formulassymmrandom}
          &= \frac{4}{d_{\rm sym}^{(N, 2)}}
          \tr \left[\mathcal{P}_{\rm sym}^{(N, 2)}  J_\alpha^{2k} \right]
          - \frac{4}{d_{\rm sym}^{(N, 2)}\left(d_{\rm sym}^{(N, 2)}+1\right)}
          \tr \left[ \left(\mathcal{P}_{\rm sym}^{(N, 2)} \otimes \mathcal{P}_{\rm sym}^{(N, 2)} + \swap\right)
          J_\alpha^{k} \otimes J_\alpha^{k} \right]
          \\ \label{eq:somestep1}
          &= \frac{4}{N + 1} \left[
          \tau_{N, 2k}
          - \frac{1}{N + 2}
          \left( \tau_{N, k}^2 + \tau_{N, 2k}
          \right)
          \right].
    \end{align}
    \end{subequations}
    In Eq.~(\ref{eq:variancefisher}), we inserted the definition of the variance. In Eq.~(\ref{eq:formulassymmrandom}), we used the formilas in Eq.~(\ref{eq:formlasintrandpmstatesym}). In Eq.~(\ref{eq:somestep1}), we used that $d_{\rm sym}^{(N, 2)} = N + 1$, denoted that $\tau_{N, k} \equiv \tr [\mathcal{P}_{\rm sym}^{(N, 2)}  J_\alpha^{k} ]$, and employed the SWAP trick: $\tr[\swap X \otimes Y] = \tr[XY]$ for operators $X,Y$. Then we only need to evaluate the term $\tau_{N, k}$.
    
    To proceed, let us recall that the $N$-qubit symmetric subspace is spanned by the $N$-qubit Dicke state with $m$ excitations $\{\ket{D_{N,m}}\}_{m=0}^N$ given by
    \begin{equation}
        \ket{D_{N,m}} = \binom{N}{m}^{-\frac{1}{2}}
        \sum_{k} \pi_k
        \left[\ket{1}^{\otimes m} \otimes \ket{0}^{\otimes (N-m)}
        \right],
    \end{equation}
    where the summation in $\sum_{k} \pi_k$ is over all permutations between the qubits that lead to different terms. Then the projector onto the symmetric subspace $\mathcal{P}_{\rm sym}^{(N, 2)}$ is written as
    \begin{equation}\label{eq:projectordicke}
        \mathcal{P}_{\rm sym}^{(N, 2)}
        = \sum_{m=0}^N \ket{D_{N,m}} \! \bra{D_{N,m}}.
    \end{equation}
    Inserting Eq.~(\ref{eq:projectordicke}) into $\tau_{N, k} = \tr [\mathcal{P}_{\rm sym}^{(N, 2)}  J_\alpha^{k} ]$, we obtain
    \begin{equation}\label{eq:tauformula}
        \tau_{N, k}
        = \sum_{m=0}^N 
        \tr \left[\ket{D_{N,m}} \! \bra{D_{N,m}}  V^{\otimes N} J_z^{k} (V^\dagger)^{\otimes N} \right]
        = \sum_{m=0}^N 
        \tr \left[\ket{D_{N,m}} \! \bra{D_{N,m}}  J_z^{k} \right]
        = \sum_{m=0}^N  \left(\frac{N}{2}-m\right)^k,
    \end{equation}
    where we first used that any $J_\alpha$ can be expressed using $J_z$ and a collective local unitary $V^{\otimes N}$, i.e.,  $J_\alpha = V^{\otimes N} J_z (V^\dagger)^{\otimes N}$, then the invariant property discussed in Eq.~(\ref{eq:haarstateintegrals}), and finally employed the fact that the Dicke state $\ket{D_{N,m}}$ is the eigenstate of $J_z$ with the eigenvalues $\tfrac{N}{2}-m$, i.e., $J_z \ket{D_{N,m}} = (\tfrac{N}{2}-m)\ket{D_{N,m}}$.
    
    One can further expand $\tau_{N,k}$ in Eq.~(\ref{eq:tauformula}) as
    \begin{subequations}
    \begin{align}
        \tau_{N, k}
        &= \left(\frac{N}{2}\right)^k + \sum_{m=1}^N \sum_{p=0}^k \binom{k}{p} \left(\frac{N}{2}\right)^{k-p} (-1)^p m^p
        \\ \label{eq:Faulhaberfurmula}
        &= \left(\frac{N}{2}\right)^k + \sum_{p=0}^k \binom{k}{p} \left(\frac{N}{2}\right)^{k-p} (-1)^p \frac{1}{p+1} \sum_{r=0}^p \binom{p+1}{r} B_r N^{p-r+1},
    \end{align}
    \end{subequations}
    where we used Faulhaber's formula
    \begin{equation}
        \sum_{m=1}^N m^p = \frac{1}{p+1} \sum_{r=0}^p \binom{p+1}{r} B_r N^{p-r+1}.
    \end{equation}
    Here $B_r$ denotes the Bernoulli numbers, e.g.,
    \begin{equation}
        B_0 = 1,
        \quad B_1 = \frac{1}{2},
        \quad B_2 = \frac{1}{6},
        \quad B_3 = 0,
        \quad B_4 = -\frac{1}{30},
        \quad B_5 = 0,
        \quad B_6 = \frac{1}{42}.
    \end{equation} 
    These lead yields
    \begin{subequations}
    \begin{align}
        \tau_{N, 0} &= N+1,
        \ \tau_{N, 1} = 0,
        \ \tau_{N, 2} = \frac{N(N+1)(N+2)}{12},
        \ \tau_{N, 3} = 0,
        \\
        \tau_{N, 4} &= \frac{N(N+1)(N+2)[3N(N+2)-4]}{240},
        \ \tau_{N, 5} = 0,
        \ \tau_{N, 6} = \frac{N(N+1)(N+2)[3 N (N^3 + 4N^2 -8)+16]}{1344}.
    \end{align}
    \end{subequations}
    Inserting these values into Eq.~(\ref{eq:somestep1}), we thus obtain
    \begin{align}
        \Bar{\mathcal{F}}_Q(J_\alpha)
        &=  \frac{N(N+1)}{3},
        \\
        \Bar{\mathcal{F}}_Q(J_\alpha^2)
        &=  \frac{N(N-1)(N+1)(N+3)}{45},
        \\
        \Bar{\mathcal{F}}_Q(J_\alpha^3)
        &=  \frac{N(N+1)[3 N (N^3 + 4N^2 -8)+16]}{336}.
    \end{align}
    We note that one can similarly find the higher-order expression of $\Bar{\mathcal{F}}_Q(J_\alpha^k)$ for $k \geq 4$. 

    Let us discuss the scaling behavior for large $N$. Denoting $x = \tfrac{N}{2}-m$ and then replacing the sum in $\tau_{N, k}$ in Eq.~(\ref{eq:tauformula}) by the integral for large $N$, we find
    \begin{equation}
        \tau_{N, k}
        = \sum_{x =-\frac{N}{2}}^{\frac{N}{2}} x^k
        \approx
        \int_{-\frac{N}{2}}^{\frac{N}{2}} dx \, x^k
        = \frac{1+ (-1)^{k}}{k+1} \left( \frac{N}{2} \right)^{k+1}.
    \end{equation}
    Inserting this into Eq.~(\ref{eq:somestep1}), we thus obtain
    \begin{align}
        \Bar{\mathcal{F}}_Q(J_\alpha^k)
        \approx
        \frac{4}{N} \left[
        \frac{N^{2k+1}}{2k \cdot 4^k}
          - \frac{N^{2k+2}[1+ (-1)^{k}]^2}{N \cdot k^2 \cdot 4^{k+1}}
          \right]
        =\frac{N^{2k}}{4^{k-1}} \frac{1}{2k} \delta_{k, \text{odd}}
        + \frac{N^{2k}}{4^k} \frac{2(k-2)}{k^2} \delta_{k, \text{even}}.
    \end{align}
    Recalling $\mathcal{C}_{\rm ent}(J_\alpha^k)$ in Eq.~(\ref{eq:allbound}) in the main text, we  write
    \begin{equation}
    \mathcal{C}_{\rm ent}(J_\alpha^k)
    \approx \frac{N^{2k}}{4^{k-1}} \delta_{k, \text{odd}}
    + \frac{N^{2k}}{4^{k}} \delta_{k, \text{even}},
    \quad
    {\rm for \ large \ } N.
    \end{equation}
    Then we can arrive at
    $\Bar{\mathcal{F}}_Q(J_\alpha^k) \propto (1/k) \mathcal{C}_{\rm ent}(J_\alpha^k)$ for large $N$. Hence, we can complete the proof.
\end{proof}

\noindent
\textbf{Remark:}
In Ref.~\cite{oszmaniec2016random}, it was already shown that the typical value of the QFI of random pure symmetric states can be close to the average QFI in Eq.~(\ref{eq:averageQFI}) for the linear Hamiltonians $H_{\rm L}$. More specifically, in Appendix~\ref{ap:complicatedHam} in Ref.~\cite{oszmaniec2016random}, the technical discussion about the concentration of measure inequality~\cite{anderson2010introduction} was described. In particular, in the case of $N$-qubit systems, the following large deviation bound was shown to hold for any $\epsilon \geq 0$: 
\begin{equation}\label{eq:concentrationQFI}
    {\rm Prob} \left[ 
    F_Q(\ket{\psi_{\rm sym}}, H) 
    \leq \Bar{\mathcal{F}}_Q(H) - \epsilon \right]
    \leq \exp \left[-\frac{(N+1) \epsilon^2}{4L_H^2} \right],
\end{equation}
where $L_H$ can be taken as $L_H = 32 \norm{H}^2$ and $\norm{X}$ denotes the operator norm of $X$. For $H = J_\alpha^k$, one has
\begin{equation} \label{eq:deviationbound}
    {\rm Prob} \left[
    F_Q(\ket{\psi_{\rm sym}}, J_\alpha^k)
    \leq \Bar{\mathcal{F}}_Q(J_\alpha^k) - \epsilon \right]
    \leq \exp \left[-\frac{(N+1) \epsilon^2}{4096 (N/2)^{4k}} \right],
\end{equation}
where we used $L_{J_\alpha^k} = 32 (N/2)^{2k}$. Setting $\epsilon =  \Bar{\mathcal{F}}_Q(J_\alpha^k) - \mathcal{C}_{\rm sep}(J_\alpha^k)$, Eq.~(\ref{eq:deviationbound}) implies that the probability that the QFI of a random pure symmetric state $\ket{\psi_{\rm sym}}$ cannot violate the separability bounds $\mathcal{C}_{\rm sep}(J_\alpha^k)$ can decrease as large $N$. In Fig.~\ref{fig:6}, we plot the confidence $\gamma \equiv 1 - {\rm Prob} \left[F_Q(\ket{\psi_{\rm sym}}, J_\alpha^k) \leq \Bar{\mathcal{F}}_Q(J_\alpha^k) - \epsilon \right]$ for the certification of the violation of the separability bound $\mathcal{C}_{\rm sep}(J_\alpha^k)$.

\begin{figure}[t]
    \centering
    \includegraphics[width=0.5\columnwidth]{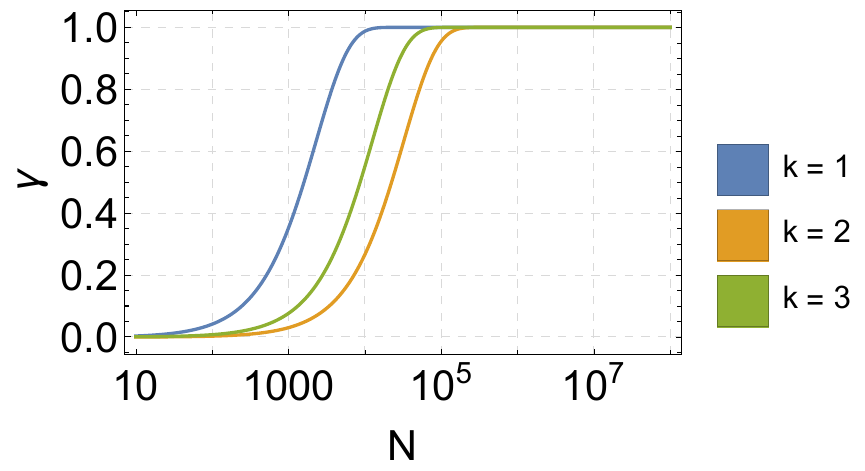}
    \caption{Linear-Log plot of the confidence $\gamma$ computed via Eq.~(\ref{eq:deviationbound}) to certify the violation of the separability bounds $\mathcal{C}_{\rm sep}(J_\alpha^k)$ for $k=1,2,3$ and $10\leq N \leq 10^8$.}
    \label{fig:6}
\end{figure}

%

\end{document}